\title{First-Choice Maximality Meets Ex-ante and Ex-post Fairness
}
\author{
Xiaoxi Guo$^1$
\and
Sujoy Sikdar$^2$\and
Lirong Xia$^{3}$\and
Yongzhi Cao$^{1*}$ \And
Hanpin Wang$^{4,1}$\\
\affiliations
$^1$Key Laboratory of High Confidence Software Technologies
        (MOE),\\ School of Computer Science, Peking University\\
$^2$Department of Computer Science, Binghamton University\\
$^3$Department of Computer Science, Rensselaer Polytechnic Institute\\
$^4$School of Computer Science and Cyber Engineering, Guangzhou University\\
\emails
guoxiaoxi@pku.edu.cn,
ssikdar@binghamton.edu,
xialirong@gmail.com,
\{caoyz, whpxhy\}@pku.edu.cn
}
\algrenewcommand\algorithmicindent{0.75em}
\newcommand{\algmargin}{\the\ALG@thistlm}
\newlength{\whilewidth}
\algnewcommand{\parState}[1]{\State%
	\parbox[t]{\dimexpr\linewidth-\algmargin}{\strut #1\strut}}
\colorlet{mygray}{gray!40}
\newcommand*\bcircled[1]{\tikz[baseline=(char.base)]{
		\node[shape=circle,draw=black,inner sep=1pt] (char) {#1};}}
\crefname{prop}{Proposition}{Propositions}
\crefname{thm}{Theorem}{Theorems}
\crefname{lem}{Lemma}{Lemmas}
\newcommand{\impord}[1]{\vartriangleright_{#1}} 
\newcommand{\ma}{\mathcal{A}}
\newcommand{\mr}{\mathcal{R}}
\newcommand{\mP}{\mathcal{P}}
\newcommand{\ram}{random} 
\newcommand{\sd}[1]{\nobreak\succeq^{sd}_{#1}\allowbreak}
\newcommand{\fsdef}{\text{sd-}\allowbreak\text{envy-}\allowbreak\text{freeness}}
\newcommand{\sdef}{\text{sd-}\allowbreak\text{EF}}
\newcommand{\fsdopt}{\text{sd-}\allowbreak\text{efficiency}}
\newcommand{\sdopt}{\text{sd-}\allowbreak\text{E}}
\newcommand{\fsdsp}{\text{sd-}\allowbreak\text{weak-}\allowbreak\text{strategy}\allowbreak\text{proofness}}
\newcommand{\sdsp}{\text{sd-}\allowbreak\text{WSP}}
\newcommand{\ucs}{U}			
\newcommand{\fsdwef}{\text{sd-}\allowbreak\text{weak-}\allowbreak\text{envy-}\allowbreak\text{freeness}}
\newcommand{\sdwef}{\text{sd-}\allowbreak\text{WEF}}
\newcommand{\ld}[1]{\succ_{#1}^{lexi}}
\newtheorem{definition}{Definition}
\newtheorem{example}{Example}
\newtheorem{remark}{Remark}
\newcommand{\am}{\text{EBM}}
\newcommand{\rk}[2]{rk(#1,#2)}
\newcommand{\ffhr}{\text{favoring-}\allowbreak\text{higher-}\allowbreak\text{ranks}}
\newcommand{\Ffhr}{\text{Favoring-}\allowbreak\text{higher-}\allowbreak\text{ranks}}
\newcommand{\fhr}{\text{FHR}}
\newcommand{\ffhcr}{\text{favoring-}\allowbreak\text{eagerness-}\allowbreak\text{for-}\allowbreak\text{remaining-}\allowbreak\text{items}}
\newcommand{\fhcr}{\text{FERI}}
\newcommand{\fopt}{\text{Pareto-}\allowbreak\text{efficiency}}
\newcommand{\opt}{\text{PE}}
\newcommand{\citeay}[1]{\citeauthor{#1} [\citeyear{#1}]}
\newcommand{\tp}[2]{\nobreak top(#1,#2)\allowbreak}
\newcommand{\frkm}{\text{rank-}\allowbreak\text{maximality}}
\newcommand{\tps}[2]{T_{#1,#2}}
\newcommand{\rka}[3]{rk(#1,#2,#3)}
\newcommand{\fefo}{\text{envy-}\allowbreak\text{free }\allowbreak{up }\allowbreak{to }\allowbreak{one }\allowbreak{item}}
\newcommand{\efo}{EF1}
\newcommand{\fefx}{\text{envy-}\allowbreak\text{free }\allowbreak{up }\allowbreak{to }\allowbreak{any }\allowbreak{item}}
\newcommand{\fsdefo}{\fefo}
\newcommand{\sdefo}{\efo}
\newcommand{\bm}{\text{BM}}
\newcommand{\mam}{\text{GEBM}}
\newcommand{\fmam}{\text{generalized }\allowbreak{eager}\allowbreak{ Boston }\allowbreak{mechanism}}
\newcommand{\mpbm}{\text{GPBM}}
\newcommand{\fmpbm}{\text{generalized }\allowbreak{probabilistic }\allowbreak{Boston mechanism}}
\newcommand{\ntr}{neutrality}
\newcommand{\fcm}{\text{FCM}}
\newcommand{\ffcm}{\text{first-}\allowbreak\text{choice }\allowbreak\text{maximality}}
\newcommand{\Ffcm}{\text{First-}\allowbreak\text{choice }\allowbreak\text{maximality}}
\newcommand{\Y}{\textcolor{black}{Y}}
\newcommand{\N}{\textcolor{black}{N}}
\colorlet{mycolor}{blue!20}
\newcommand{\multiline}[1]{%
  \begin{tabularx}{\dimexpr\linewidth-\ALG@thistlm}[t]{@{}X@{}}
    #1
  \end{tabularx}
}
\newcommand{\dal}{\text{deterministic }\allowbreak\text{allocation}}
\newcommand{\das}{\text{deterministic }\allowbreak\text{assignment}}
\newcommand{\dals}{\text{deterministic }\allowbreak\text{allocations}}
\newcommand{\dass}{\text{deterministic }\allowbreak\text{assignments}}
\renewcommand{\dal}{\text{allocation}}
\renewcommand{\das}{\text{assignment}}
\renewcommand{\dals}{\text{allocations}}
\renewcommand{\dass}{\text{assignments}}
\begin{document}

\maketitle

\begin{abstract}
For the assignment problem where multiple indivisible items are allocated to a group of agents given their ordinal preferences, we design randomized mechanisms that satisfy {\em \ffcm{}} (\fcm{}), i.e., maximizing the number of agents assigned their first choices, together with {\em \fopt{}} (\opt{}).
Our mechanisms also provide guarantees of ex-ante and ex-post fairness. 
The {\em \fmam{}} is ex-ante envy-free, and ex-post {\em envy-free up to one item} (\sdefo{}). 
The {\em \fmpbm{}} is also  ex-post \sdefo{}, and satisfies ex-ante efficiency instead of fairness.
We also show that no strategyproof mechanism satisfies ex-post \opt{}, \sdefo{}, and \fcm{} simultaneously.
In doing so, we expand the frontiers of simultaneously providing efficiency and both ex-ante and ex-post fairness guarantees for the assignment problem.
\end{abstract}

\section{Introduction}

How should $m$ indivisible items be assigned to $n$ agents efficiently and fairly when the agents have heterogeneous preferences over the items? This {\em assignment problem} is fundamental to economics, and increasingly, computer science, due to its versatility in modeling a wide variety of real-world problems such as assigning computing resources in cloud computing~\cite{Ghodsi11:Dominant,Grandl15:Multi}, courses to students in colleges~\cite{budish2011combinatorial}, papers to referees~\cite{garg2010assigning}, and medical resources in healthcare~\cite{kirkpatrick2020scarce,pathak2021fair,aziz2021efficient} where agents may obtain multiple items.
In these problems, efficiency and fairness are the common desiderata, and also the goal of mechanism design.

Efficiency reflects the degree of agents' satisfaction and the room of improvement for the given assignment.
The number of agents who are allocated their first choices or one of their top $k$ choices is often highlighted as a measure of efficiency in practice~\cite{chen2006school,li2020ties,irving2006rank}.
{\em \Ffcm{}} (\fcm{}), i.e., maximizing the number of agents allocated their first choices~\cite{dur2018first}, is considered to be either highly desirable or indispensable in problems like job markets~\cite{kawase2020subgame}, refugee reallocation~\cite{sayedahmed2022centralized}, and school choice~\cite{friedman1955role,friedman1962capitalism}.
In addition, {\em Pareto efficiency} (\opt{}) is often considered a basic efficiency property, which requires that the items cannot be redistributed in a manner strictly preferred by some agents and no worse for every other agent.
In other words, it urges that all the improvements without undermining agents' benefits should be made.

Guaranteeing the fairness of assignments is also an important consideration, and {\em envy-freeness} (EF)~\cite{gamow1958puzzle,foley1967resource}, which requires that no agent prefers the allocation of another agent to her own, is an exemplar of fairness requirements. However, an envy-free assignment may not exist for indivisible items, for example, when agents have identical preferences. Exact envy-freeness can only be guaranteed by randomization over assignments. This allows an ex-ante guarantee that every agent values its expected allocation, interpreted as probabilistic ``shares'' of items, at least as much as that of any other agent when using the notion of {\em stochastic dominance} (sd)~\cite{Bogomolnaia01:New} as the method of comparison. Such a random assignment describes a probability distribution over all the assignments.

Envy-freeness can also be achieved approximately ex-post.
{\em Envy-freeness up to one item} (\efo{}), which guarantees that any pairwise envy among agents is eliminated by removing one item from the envied agent's allocation~\cite{budish2011combinatorial}, is popular among such approximations for its compatibility with efficiency~\cite{caragiannis2019unreasonable}.

\begin{table*}[htp!]
\caption{Comparison of the properties guaranteed by RSDQ, PS-Lottery, \mpbm{} and \mam{}.}\label{tab:properties}
\begin{center}
\begin{tabular}{|c|cc|c|c|cc|c|}
    \hline

    \multirow{3}{*}{Mechanism} & \multicolumn{3}{c|}{Efficiency} & \multicolumn{3}{c|}{Fairness} & \multirow{2}{*}{Strategyproofness}\\\cline{2-7}
    & \multicolumn{2}{c|}{ex-post} & ex-ante & ex-post & \multicolumn{2}{c|}{ex-ante} & \\\cline{2-8}

    & \fcm{} 
    & \opt{} & \sdopt{} & \sdefo{} & \sdwef{} & \sdef{} & \sdsp{} 
    \\\hline
    
    RSDQ
    & \N$^\texttt{a~}$ 
    & \Y$^\texttt{b~}$ & \N$^\texttt{c~}$
    & \N$^{P\ref{prop:rsdpnefo}}$ & \Y$^\texttt{b~}$ & \N$^\texttt{c~}$
    & \Y$^\texttt{b~}$ \\
    
    PS-Lottery
    & \N$^\texttt{a~}$
    & \Y$^\texttt{d~}$ & \Y$^\texttt{d~}$
    & \Y$^\texttt{d~}$ & \Y$^\texttt{d~}$ & \Y$^\texttt{d~}$ 
    & ~\N$^\texttt{e~~}$ \\
    
    \mam{}
    & \cellcolor{mycolor}\Y$^{T\ref{thm:mam}}$ 
    & \cellcolor{mycolor}\Y$^{T\ref{thm:mam}}$ & \N$^\texttt{a~}$
    & \cellcolor{mycolor} \Y$^{T\ref{thm:mam}}$ & \cellcolor{mycolor}\Y$^{T\ref{thm:mam}}$ & \N$^\texttt{a~}$  
    & \cellcolor{mycolor}\N$^{P\ref{prop:imp}}$ \\

    \mpbm{}
    & \cellcolor{mycolor}\Y$^{T\ref{thm:mpbm}}$ 
    & \cellcolor{mycolor}\Y$^{T\ref{thm:mpbm}}$ & \cellcolor{mycolor}\Y$^{T\ref{thm:mpbm}}$
    & \cellcolor{mycolor}\Y$^{T\ref{thm:mpbm}}$ & \cellcolor{mycolor}\N$^{R\ref{rmk:mpbm}}$ & \cellcolor{mycolor}\N$^{R\ref{rmk:mpbm}}$
    & \cellcolor{mycolor}\N$^{P\ref{prop:imp}}$ \\
    \hline
\end{tabular}
\end{center}
{\footnotesize Note: A `Y' indicates that the mechanism at that row satisfies the property at that column, and an `N' indicates that it does not. 
Results annotated with `\texttt{a}' follow from~\protect\cite{guo2021favoring}, `\texttt{b}' from~\protect\cite{hosseini2019multiple}, `\texttt{c}' from~\protect\cite{Bogomolnaia01:New}, `\texttt{d}' from~\protect\cite{aziz2020simultaneously}, and  `\texttt{e}' from~\protect\cite{kojima2009random} respectively. A result annotated with T, P, R refers to a Theorem, Proposition, or Remark in this paper, respectively.}
\end{table*}

The {\em Boston mechanism} (\bm{}) is widely used in the special case of the assignment problem where each agent is required to be matched with at most one item. The output of \bm{}  satisfies both \fcm{} and \opt{}~\cite{abdulkadirouglu2003school,Kojima2014:Boston}. Although \bm{} does not provide an ex-ante guarantee of envy-freeness, a variant of \bm{} named the {\em eager Boston mechanism} provided in our previous work~\cite{guo2021favoring} guarantees {\em \fsdwef{}} (\sdwef{}), a mildly weaker notion of ex-ante envy-freeness, while retaining \fcm{} and \opt{}. However, since each agent is matched with only one item, concerns over ex-post approximations of envy-freeness such as \efo{} do not arise.

For the general case of the assignment problem where agents may be assigned more than one item, recent work aims to achieve the ``{\em best of both worlds}'' (BoBW), i.e., both ex-ante and ex-post fairness.~\citeay{freeman2020best} showed that a form of ex-ante envy-freeness, \fsdef{} (\sdef), and ex-post \efo{} can be achieved simultaneously. \citeay{aziz2020simultaneously} showed that these fairness guarantees can also be achieved together with {\em \fsdopt{}} (\sdopt{}), an ex-ante variant of \opt{}.
However, the mechanisms developed in these works do not guarantee \fcm{}. 

The pursuits for efficiency and simultaneous ex-ante and ex-post fairness outlined above inevitably raise the following natural open question that we investigate in this paper:
{\em``How to design mechanisms that allocate $m$ items to $n$ agents and guarantee both efficiency (\fcm{} and \opt{}) and fairness (envy-freeness), both ex-ante and ex-post?''}

\paragraph{Our contributions.}
We provide two novel randomized mechanisms, the \fmam{} (\mam{}) and the \fmpbm{} (\mpbm), both of which satisfy ex-post \fcm{} and \opt{} together with different combinations of desirable efficiency and fairness properties  as we summarize in \cref{tab:properties}. In particular:
\begin{itemize}[wide,labelindent=0em]
    \item \mam{} provides both ex-post and ex-ante fairness guarantees, satisfying satisfies \sdwef{} and ex-post \efo{}  (\Cref{thm:mam}).
    \item \mpbm{} also satisfies ex-post \efo{}, and provides a stronger ex-ante efficiency guarantee (\sdopt{})  instead of ex-ante fairness (\Cref{thm:mpbm}).
\end{itemize}

\paragraph{Related work and Discussions.}
The assignment problem is a generalization of the matching problem with one-sided preferences~\cite{Moul04,manlove2013algorithmics}, where each agent must be assigned at most one item given agents' preferences over the items. 
In matching problems, \bm{}~\cite{abdulkadirouglu2003school} and \am{}~\cite{guo2021favoring} proceed in multiple rounds as follows. In each round $r$ of \bm{}, each agent that has not been allocated an item yet applies to receive its $r$-th ranked item. Each item, if it has applicants, is allocated to the one with the highest priority. Randomization over priority orders yields a mechanism whose expected output is a random assignment. In the round of \am{}, each remaining agent applies for its most preferred remaining item, and each item is allocated to one of the applicants through a lottery. This subtle change results in the ex-ante fairness guarantee of \am{}. 

The trade-off of efficiency and fairness in the assignment problem has been a long-standing topic, and guaranteeing \fcm{} with other properties has been the focus of several recent research efforts.
Notice that both \bm{} and \am{} guarantee \fcm{} since every item ranked on top by some agent is allocated to one such agent in the first round during the execution of both mechanisms.  
For the matching problems, \citeay{Ramezanian2021:Ex-post} showed that the efficiency notion \ffhr{} (\fhr{}), which characterizes \bm{} and implies \fcm{}.
Our previous work showed that \fhr{} is not compatible with \sdwef{}, and provided \ffhcr{} as an alternative  that also implies \fcm{} ~\cite{guo2021favoring}. 
For the general case of the assignment problem, \citeay{hosseini2021fair} showed  an assignment that satisfies both {\em\frkm{}} (a stronger efficiency property that also implies \fcm{}~\cite{irving2006rank}) and {\em\fefx{}} (a stronger approximation of envy-freeness that implies \efo{}~\cite{caragiannis2019unreasonable}) does not always exist.

Designing randomized mechanisms that provide ex-ante guarantees of efficiency and fairness has been a long standing concern in the literature. 
\citeay{Bogomolnaia01:New} proposed the probabilistic serial (PS) mechanism that satisfies \sdopt{} and \sdef{} for the matching problem. 
Similar efforts have been made for housing markets~\cite{athanassoglou2011house,altuntacs2022trading,yilmaz2010probabilistic} and assignment problems with quotas~\cite{Budish2013:Designing,kojima2009random}. 
The ex-post approximation of envy-freeness also raised concern in the previous work.
However, the endeavors to provide both ex-ante and ex-post guarantees of fairness simultaneously are more recent~\cite{babaioff2022best,freeman2020best,aziz2020simultaneously,hoefer2022best}.

\Cref{tab:properties} compares our \mam{} and \mpbm{} with existing mechanisms. The random serial dictatorship quota mechanism (RSDQ)~\cite{hosseini2019multiple} satisfies strategyproofness, meaning no agent can benefit by misreporting its preferences, but does not provide either ex-ante efficiency or ex-post fairness guarantees. 
\citeay{aziz2020simultaneously} proposed the PS-Lottery mechanism as an extension of PS which provides strong ex-ante efficiency and fairness guarantees. However, neither RSDQ nor PS-Lottery takes the ranks of items into consideration, and therefore they do not satisfy \fcm{}.
 
\section{Preliminaries}
An instance of the {\em assignment problem} is given by a tuple $(N,M)$, where $N=\{1,2,\dots,n\}$ is a set of $n$ {\em agents} and $M=\{o_1,o_2,\dots,o_m\}$ is a set of $m$ distinct indivisible {\em items}; and a {\em preference profile} $R=(\succ_j)_{j\in N}$, where for each agent $j\in N$, $\succ_j$ is a strict linear order representing $j$'s preferences over $M$. Let $\mr$ be the set of all possible preference profiles.

For each agent $j\in N$ and for any set of items $S\subseteq M$, we define $\rka{\succ_j}{o}{S}\in\{1,\dots,|S|\}$ to denote the rank of item $o\in S$ among the items in $S\subseteq M$ according to $\succ_j$, and $\tp{\succ_j}{S}\in S$ to denote the item ranked highest in $S$. When $S=M$, we use $\rk{\succ_j}{o}$ for short; and when agent $j$'s preference relation is clear from context, we use $\rka{j}{o}{S}$ and $\tp{j}{S}$ instead. We use $\succ_{-j}$ to denote the collection of preferences of agents in $N\setminus\{j\}$. For any linear order $\succ$ over $M$ and item $o$, $\ucs(\succ,o)=\{o'\in M|o'\succ o\}\cup\{o\}$ represents the items weakly preferred to $o$.
	
\vspace{1em}
\noindent{\bf Allocations, Assignments, and Mechanisms.} 
The subset of items that an agent receives, which we call an {\em \dal{}}, is a binary $m$-vector $a=[a(o)]_{o\in M}$.
The value $a(o)=1$ indicates that item $o$ is in the subset represented by $a$, and we also use $o\in a$ for that.
A {\em \ram{} allocation} is a $m$-vector $p=[p(o)]_{o\in M}$ with $0\le p(o)\le 1$, describing the probabilistic share of each item. Let $\Pi$ be the set of all possible \ram{} allocations, and any allocation belongs to $\Pi$ trivially.

An {\em \das{}} $A:N\to 2^M$ is a mapping from agents to \dals{}, represented by an $n\times m$ matrix. 
For each agent $j\in N$, we use $A(j)$ to denote the allocation for $j$. Let $\ma$ denote the set of all the \dass{}. A {\em \ram{} assignment} is an $n\times m$ matrix $P=[P(j,o)]_{j\in N, o\in M}$. 
For each agent $j\in N$, the $j$-th row of $P$, denoted $P(j)$, is agent $j$'s \ram{} allocation, and for each item $o\in M$, $P(j,o)$ is $j$'s probabilistic share of $o$. 
We use $\mP$ to denote the set of all possible \ram{} assignments, and we note that $\ma\subseteq\mP$, i.e., an \das{} can be regarded as a \ram{} assignment.

A {\em mechanism} $f:\mr\to\mP$ is a mapping from preference profiles to \ram{} assignments.
For any profile $R\in\mr$, we use $f(R)$ to refer to the \ram{} assignment output by $f$. 

\subsection{Desirable Properties}\label{sec:properties}

Before giving the specific definitions of desirable properties, we introduce two methods for random allocation comparison.

\begin{definition}\rm\cite{Bogomolnaia01:New}\label{dfn:sd}
	Given a preference relation $\succ$ over $M$, the {\em stochastic dominance} relation associated with $\succ$, denoted by $\sd{\null}$, is a partial ordering over $\Pi$ such that for any pair of \ram{} allocations $p,q\in\Pi$, $p$ (weakly) {\em stochastically dominates} $q$, denoted by $p\sd{\null} q$, if for any $o\in M$, $\sum_{o'\in\ucs(\succ,o)}p(o')\ge\sum_{o'\in\ucs(\succ,o)}q(o')$.
\end{definition}

\begin{definition}
	Given a preference relation $\succ$ over $M$, the {\em lexicographic dominance} relation associated with $\succ$, denoted by $\ld{\null}$, is a strict linear ordering over $\Pi$ such that for any pair of \ram{} allocations $p,q\in\Pi$, $p$  {\em lexicographically dominates} $q$, denoted by $p\ld{\null} q$, if there exists an item $o$ such that $p(o)>q(o)$ and $p(o')=q(o')$ for any $o'\succ o$.
\end{definition}

Given a preference profile $R$, an \das{} $A$ satisfies:
\begin{enumerate}[label=(\roman*),wide,labelindent=0em,topsep=0em,itemsep=0em]
	\item {\bf\fopt{} (\opt{})} if there does not exist another $A'$ such that $A'(j)\ld{} A(j)$ for $j\in N'\neq\emptyset$ and $A'(k)=A(k)$ for $k\in N\setminus  N'$,
    \item {\bf\fsdefo{} (\sdefo{})} if for any agents $j$ and $k$, there exists an item $o$ such that $A(j)\sd{j} A(k)\setminus{}\{o\}$, and
    \item {\bf\ffcm{} (\fcm)} if  there does not exist another $A'$ such that $\lvert \{j\in N\mid \rk{j}{o}=1\text{ and }o\in A'(j)\}\rvert>\lvert\{ j\in N\mid \rk{j}{o}=1\text{ and }o\in A(j)\}\rvert$.
\end{enumerate}
	
In general, given a property $X$ for \dass{}, a \ram{} assignment satisfies {\em ex-post} $X$ if it is a convex combination of \dass{} satisfying property $X$, and a mechanism $f$ satisfies a property $Y$ if $f(R)$ satisfies $Y$ for every profile $R\in\mr$. Given a preference profile $R$, a \ram{} assignment $P$ satisfies:
\begin{enumerate}[label=(\roman*),wide,labelindent=0em,topsep=0em,itemsep=0em]
    \item {\bf\fsdopt{} (\sdopt{})} if there is no \ram{} assignment $Q\neq P$ such that $Q(j)\sd{j}P(j)$ for any $j\in N$, and
    \item {\bf\fsdwef{} (\sdwef{})} if $P(k)\sd{j}P(j)\allowbreak\implies P(j)=P(k)$.
\end{enumerate}
	
A mechanism $f$ satisfies:
\begin{enumerate}[label=(\roman*),wide,labelindent=0em,topsep=0em,itemsep=0em]
    \item {\bf\fsdsp{} (\sdsp{})} if for every $R\in\mr$, any $j\in N$, and any $R'=(\succ'_j,\succ_{-j})$, it holds that $f(R')(j)\sd{j}f(R)(j)\implies f(R')(j)=f(R)(j)$,
    \item {\bf\ntr{}} if given any permutation $\pi$ over the items, $f(\pi(R))=\pi(f(R))$ for any preference profile $R$.
    The permutation $\pi$ is given as $\{(o_1,o_2),(o_2,o_3),\dots\}$, and $\pi(R)$ (respectively, $\pi(f(R))$) is obtained by replacing $o_i$ with $o_j$ in $R$ (respectively, $f(R)$) for each ordered pair $(o_i, o_j)\in \pi$.
\end{enumerate}

\begin{restatable}{lem}{lemopt}{}\label{lem:opt}{\em [Folklore]}
An \das{} $A$ satisfies \fopt{} if and only if the relation $\{(o,o')\in M\times M\mid\text{there\allowbreak{} exists } j\in N\text{ with }  o'\succ_j o\text{ and }o\in A(j)\}$ is acyclic.
\end{restatable}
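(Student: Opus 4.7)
}
The plan is to prove both directions by contrapositive, interpreting a cycle in the relation as a ``trading cycle'' whose execution is a lexicographic Pareto improvement, and conversely extracting such a cycle from any given Pareto improvement.

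For the forward direction ($\opt \Rightarrow$ acyclic), I would assume $G = \{(o,o') : \exists j,\ o'\succ_j o,\ o\in A(j)\}$ contains a cycle and pick one of minimum length $o_1\to o_2\to\cdots\to o_k\to o_1$, with corresponding witness agents $j_1,\ldots,j_k$ (where $o_i\in A(j_i)$ and $o_{i+1}\succ_{j_i}o_i$, indices mod $k$). The first key step is to show that minimality forces the $j_i$ to be pairwise distinct: if $j_i=j_l=j$ with $i<l$, then $j$ owns $o_i$ and prefers $o_{i+1}$, $o_{l+1}$; by trichotomy between $o_i$ and $o_l$ (or, in the degenerate case $l=i+1$, by transitivity of $\succ_j$ giving $o_{i+2}\succ_j o_i$), one obtains either a strictly shorter subcycle on $\{o_i,\ldots,o_l\}$ or on $\{o_l,\ldots,o_k,o_1,\ldots,o_i\}$, contradicting minimality. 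Once the $j_i$ are distinct, I define $A'$ by letting each $j_i$ swap $o_i$ out and $o_{i+1}$ in (all other agents unchanged). Since $o_{i+1}\succ_{j_i}o_i$ and nothing ranked above $o_{i+1}$ in $\succ_{j_i}$ moves (only $o_i$ leaves and $o_{i+1}$ enters), $A'(j_i)\ld{j_i}A(j_i)$, so $A$ is not \opta{}.

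For the backward direction (acyclic $\Rightarrow \opt$), suppose $A'\neq A$ is a lexicographic improvement with $N'=\{j:A'(j)\ld{j}A(j)\}\neq\emptyset$ and $A'(k)=A(k)$ off $N'$. For each $j\in N'$, define $o^*_j$ to be the $\succ_j$-top item on which $A(j)$ and $A'(j)$ disagree; by the definition of lexicographic dominance, $o^*_j\in A'(j)\setminus A(j)$, and every item $\succ_j$-above $o^*_j$ has the same status in $A$ and $A'$. Let $\sigma(j)$ be the $A$-owner of $o^*_j$; since $\sigma(j)$ loses $o^*_j$, the equality $A'(k)=A(k)$ off $N'$ forces $\sigma(j)\in N'$, giving a map $\sigma:N'\to N'$ whose functional graph must contain a cycle $j_1\to j_2\to\cdots\to j_k\to j_1$ with $j_{i+1}=\sigma(j_i)$. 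Setting $\hat o_i=o^*_{j_i}$, one immediately has $\hat o_i\in A(j_{i+1})$, and it remains to argue $\hat o_{i+1}\succ_{j_{i+1}}\hat o_i$, which places $(\hat o_i,\hat o_{i+1})$ in $G$ and produces the desired cycle.

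The crux of the backward direction---and the step I expect to be the main obstacle---is verifying $\hat o_{i+1}\succ_{j_{i+1}}\hat o_i$. I would argue by contradiction: if $\hat o_i\succ_{j_{i+1}}\hat o_{i+1}$, then by the top-difference property of $\hat o_{i+1}$ the status of $\hat o_i$ for agent $j_{i+1}$ is unchanged from $A$ to $A'$, so $\hat o_i\in A'(j_{i+1})$; but also $\hat o_i=o^*_{j_i}\in A'(j_i)$, contradicting the fact that each item has at most one owner in $A'$ (here using $j_i\neq j_{i+1}$, which holds because $\sigma$ has no fixed points: $o^*_j\notin A(j)$ by construction). Distinctness of $\hat o_1,\ldots,\hat o_k$ follows from distinctness of $j_1,\ldots,j_k$ along the functional-graph cycle, since $j_{i+1}$ is determined by $\hat o_i$. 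This completes the cycle in $G$, contradicting acyclicity.
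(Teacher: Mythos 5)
Your proposal is correct and follows the same high-level strategy as the paper's proof (a cycle in the relation is a profitable trading cycle, and any lexicographic improvement yields such a cycle), but your execution is noticeably more careful, and in one place this genuinely matters. In the backward direction the paper picks, for each improved agent $j$, an \emph{arbitrary} item $o_j\in A'(j)\setminus A(j)$, chains the agents who lose these items, and then asserts that the resulting agent cycle ``forms a cycle'' in the item relation without checking the required preference comparisons. With an arbitrary gained item that comparison can fail, since lexicographic dominance only controls the topmost change. Your choice of $o^*_j$ as the $\succ_j$-maximal item on which $A(j)$ and $A'(j)$ disagree, together with the observation that every item above $o^*_j$ has unchanged status, is exactly what is needed to derive $\hat o_{i+1}\succ_{j_{i+1}}\hat o_i$ and hence a genuine cycle in the relation; you correctly flag this as the crux. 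Likewise, the forward direction, which the paper dismisses as trivial, does need an argument when one agent witnesses several edges of the cycle; your minimal-cycle reduction handles this, with one small point to tidy up: the wrap-around adjacency $i=1$, $l=k$ must be treated like the case $l=i+1$ (via transitivity, giving the chord $o_k\to o_2$), because there the trichotomy branch you would need is ruled out by the edge $o_k\to o_1$ itself. One caveat shared with the paper: both proofs implicitly assume every item is allocated in $A$, so that the owner $\sigma(j)$ of $o^*_j$ exists; without that assumption the lemma is false as stated (e.g., for the empty assignment the relation is empty yet the assignment is not Pareto-efficient).
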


\begin{restatable}{lem}{lemsdopt}{}\label{lem:sdopt}{\rm\cite{Bogomolnaia01:New}}
A random assignment $P$ satisfies \fsdopt{} if and only if the relation 
$\{(o,o')\in M\times M\mid\text{there\allowbreak{} exists an agent } j\in N \text{ where } o'\succ_j o\allowbreak{} \text{ and }P(j,o)>0\}$
is acyclic.
\end{restatable}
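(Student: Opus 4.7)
The plan is a two-direction proof by contraposition in each direction, mirroring the classical proof of the deterministic counterpart (\Cref{lem:opt}). Throughout, let $U_j(o)=\{o'\in M\mid o'\succ_j o\}\cup\{o\}$ be the upper contour set, so that $Q(j)\sd{j} P(j)$ is equivalent to $\sum_{o'\in U_j(o)}\Delta(j,o')\ge 0$ for every $o$, where $\Delta=Q-P$.

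For the direction ``\sdopt{} $\Rightarrow$ acyclic,'' I would prove the contrapositive by exhibiting a Pareto improvement along any cycle. Suppose $o_1\to o_2\to\cdots\to o_k\to o_1$ is a cycle witnessed by agents $j_1,\dots,j_k$, with $o_{i+1}\succ_{j_i}o_i$ and $P(j_i,o_i)>0$ (indices mod $k$). Define $Q$ by $Q(j_i,o_i)=P(j_i,o_i)-\varepsilon$, $Q(j_i,o_{i+1})=P(j_i,o_{i+1})+\varepsilon$, and $Q=P$ elsewhere, for $\varepsilon>0$ smaller than every $P(j_i,o_i)$. Row sums and column sums are preserved, so $Q\in\mP$. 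For each $j_i$ and each item $o\in M$, the change in $\sum_{o'\in U_{j_i}(o)}P(j_i,o')$ is $\varepsilon\,(\mathbf{1}[o_{i+1}\in U_{j_i}(o)]-\mathbf{1}[o_i\in U_{j_i}(o)])$; since $o_{i+1}\succ_{j_i} o_i$, whenever $o_i\in U_{j_i}(o)$ also $o_{i+1}\in U_{j_i}(o)$, so this change is nonnegative and strictly positive when $o=o_{i+1}$. Hence $Q(j_i)\sd{j_i}P(j_i)$ for all $i$ and $Q\neq P$, contradicting \sdopt{}.

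For the direction ``acyclic $\Rightarrow$ \sdopt{},'' I would again argue by contraposition: given a witness $Q\neq P$ with $Q(j)\sd{j}P(j)$ for all $j$, I will trace a cycle in the relation. Set $\Delta=Q-P$; since each item's total probability mass equals its supply under both $P$ and $Q$, every column sum of $\Delta$ is zero. Because $Q\neq P$, there exists $(j_0,o_0)$ with $\Delta(j_0,o_0)<0$, so $P(j_0,o_0)>0$. By sd-dominance, $\sum_{o'\in U_{j_0}(o_0)}\Delta(j_0,o')\ge 0$, and combined with $\Delta(j_0,o_0)<0$ this forces some $o_1\succ_{j_0}o_0$ with $\Delta(j_0,o_1)>0$; this yields an edge $(o_0,o_1)$. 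Since the $o_1$-column of $\Delta$ sums to $0$ and already contains a positive entry, some agent $j_1$ satisfies $\Delta(j_1,o_1)<0$, hence $P(j_1,o_1)>0$, and the sd-dominance argument applied to $j_1$ produces $o_2\succ_{j_1}o_1$ with $\Delta(j_1,o_2)>0$, giving the next edge $(o_1,o_2)$. Iterating produces an infinite walk $o_0\to o_1\to o_2\to\cdots$ in the relation, which by finiteness of $M$ must revisit an item and thus contain a cycle.

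The main obstacle is the reverse direction, and specifically making sure the walk never gets stuck. Two separate facts are needed at each step: (i) from any $(j_i,o_i)$ with $\Delta(j_i,o_i)<0$ the sd-dominance inequality guarantees a preferred $o_{i+1}$ with $\Delta(j_i,o_{i+1})>0$, and (ii) every column of $\Delta$ with a positive entry also has a negative entry, so a new agent $j_{i+1}$ exists to continue the walk. Step (ii) is where the column-sum bookkeeping must be made explicit, since this depends on $P$ and $Q$ both being valid random assignments over the same item supplies. Given these observations, the proof is short; I would present both directions together, using the cycle-to-improvement construction to handle the easy direction and the $\Delta$-tracing argument for the harder one.
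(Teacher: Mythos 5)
Your proof is correct, and it is essentially the canonical argument: the paper itself does not prove \Cref{lem:sdopt} but defers to the cited source~\cite{Bogomolnaia01:New}, whose proof (and the paper's own proof of the deterministic analogue, \Cref{lem:opt}) uses exactly your two ingredients --- shifting $\varepsilon$ mass along a cycle to build an sd-improvement, and tracing $\Delta=Q-P$ via zero column sums plus the upper-contour inequality to extract a cycle from any sd-improvement. The only points worth tightening are that $Q\neq P$ in the first direction needs a one-line justification (the $\pm\varepsilon$ adjustments cannot all cancel, since that would force a single agent to witness every edge and hence have a cyclic preference), and that entries of $Q$ remain in $[0,1]$ for small $\varepsilon$.
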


Due to \Cref{lem:opt,lem:sdopt}, \sdopt{} implies ex-post \opt{} for the assignment problem.

\begin{remark}
    [ex-post \fcm $\iff$ ex-ante \fcm{}]
    \rm
    A random assignment $P$ is ex-post \fcm{} if and only if it is ex-ante \fcm{}, i.e., the probabilistic shares of each item $o$ that is ranked first by some agent are allocated only to those agents who rank it first among all items. 
    This is because in any \das{} $A$ drawn from the probability distribution represented by $P$, each $o$ is assigned to one of the agents who rank it first if such an agent exists.
\end{remark}

\subsection{The Special Case of Matching}
The matching problem is a useful and important special case of the assignment problem where each agent must be matched with at most one item. 
To distinguish from the allocation in general cases, we call the \das{}  $A$ in the matching problem as a (one-to-one) matching where each agent $j$'s allocation $A(j)$ either consists of a single item or is the empty set.

 We introduce two efficiency notions that  imply \fcm{} and \opt{}.
{\em \Ffhr{}} (\fhr{}) requires that each item is allocated to an agent that ranks it as high as possible among all the items~\cite{Ramezanian2021:Ex-post}, while {\em \ffhcr{}} (\fhcr{}) requires that each remaining item is allocated to a remaining agent who prefers it to any other remaining items if such an agent exists~\cite{guo2021favoring}.
Formally, given a preference profile $R$, a matching $A$ satisfies
\begin{enumerate}[label=(\roman*),wide,labelindent=0em]
\item {\bf \ffhr{} (\fhr{})} if for any agents $j,k\in N$, $\rk{j}{A(j)}\le\rk{k}{A(j)}$ or $\rk{k}{A(k)}<\rk{k}{A(j)}$, and 
\item {\bf \ffhcr{} (\fhcr{})} if it holds that $o=\tp{A^{-1}(o)}{\allowbreak M\setminus\bigcup_{r'<r}\tps{A}{r'}}$ for every item $o\in \tps{A}{r}$ with integer $r\ge1$, where we define $\tps{A}{r}=\{o\in M\mid o=\tp{j}{\allowbreak M\setminus\bigcup_{r'<r}\tps{A}{r'}},$~\allowbreak$A(j)\notin\bigcup_{r'< r}\tps{A}{r'}\}$.
\end{enumerate}

\section{ Generalized Eager Boston Mechanism}

To achieve \fcm{} accompanied with the ex-ante and ex-post fairness in the assignment problem, we propose the \fmam{} mechanism (\mam{}) which is an extension of \am{}. 
We show in~\Cref{thm:mam} that \mam{} satisfies the efficiency requirements of \opt{} and \fcm{} ex-post while also providing fairness guarantees both ex-ante (\sdwef{}) and ex-post (\efo{}).

For any preference profile $R$ of $n$ agents' preferences over $m$ items, \mam{} (\Cref{alg:mam}) proceeds in $\lceil{m/n}\rceil$ rounds. 
At each round $c$, a matching is computed on the instance of the matching problem involving all $n$ agents and the remaining items using the \am{} mechanism. 
The final output $A$ of \mam{}$(N, M, R)$ is the composition of these matchings.
For convenience, at each round $c$ of \mam{}, we use 
\begin{itemize}[wide,labelindent=0em,label=-,itemsep=0pt,topsep=0pt]
    \item $M^c$ to refer to the items remaining at the beginning, and
    \item $A^c=\am{}(N, M^c, R)$ to refer to the matching output by \am{} for the instance $(N,M^c)$ of the matching problem.
\end{itemize}
We note that in the notation $\am{}(N, M^c, R)$, the preferences in $R$ are over $M$, so we have to specify the item set to be allocated $M^c$.
The final output of \mam{} is $A=\sum_{c=1}^{\lceil m/n\rceil}A^c$.

\begin{algorithm}[htb]
    \begin{algorithmic}[1]
        \State {\bf Input:} An assignment problem $(N,M)$ and a strict linear preference profile $R$.
        \State  $A\gets 0^{n\times m}$. $c\gets 1$.
        \For{$c=1$ to $\lceil{m/n}\rceil$}\hfill// {\texttt Round $c$ of \mam{}}
            \State $M^c\gets \{o\in M\mid A^{c'}({j,o})=0\text{ for all } j\in N, c'<c$\}.
            \Statex ~~~// Compute $A^c=\am{}(N,M^c)$
            \State $N'\gets N$, $A^c\gets 0^{n\times m}$.  
            
            \While{$N'\neq\emptyset$ and $M^c\neq\emptyset$}\hfill// {\texttt Round of \am{}}
                \For{each $o\in M^c$}
                    \State$N_o\gets\{j\in N'\mid\rk{j}{o}=\tp{j}{N'}\}$.
                    \State Pick $j_o$ from $N_o\neq\emptyset$ uniformly at random.
                    \State $A^c({j_o,o})\gets1$.
                    \State $M^c\gets M^c\setminus\{o\in M^c\mid N_o\neq\emptyset\}$.
                    \State $N'\gets N'\setminus{\cup_{o\in M^c}\{j_o\}}$.
                \EndFor
            \EndWhile
            \State $A\gets A+A^c$. 
        \EndFor
        \State \Return $A$
    \end{algorithmic}
    \caption{\label{alg:mam}Generalized Eager Boston Mechanism (\mam{})}
\end{algorithm}

    \begin{example}\label{eg:mam}\rm
        We illustrate the execution of \mam{} (\Cref{alg:mam}) using an instance with four items and two agents with preferences:
        \begin{equation*}
            \begin{split}
            \succ_{1}\text{: }& a \succ_1 b \succ_1 c \succ_1 d,\\
            \succ_{2}\text{: }& a \succ_2 c \succ_2 b \succ_2 d.\\
            \end{split}
        \end{equation*}

        The table below illustrates the execution of \Cref{alg:mam}.
        Each entry in the table refers to the item the row agent applied for in a round of execution of \am{} within a round of \mam{} indicated by the column. Circled entries indicate allocations, and a `/' entry indicates the agent was allocated an item in a previous round of \am{} and does not apply for any item.

        \begin{center}
    		\centering
    		\begin{tabular}{c|cc|cc}
    			Round of \mam{} &   \multicolumn{2}{c|}{$1$} & \multicolumn{2}{c}{$2$}\\\hline
                    Round of \am{} & $1$ & $2$ & $1$ & $2$\\
    			\hline
    			Agent $1$ & \bcircled{$a$} & / & \bcircled{$b$} & / \\
    			Agent $2$ & $a$ & \bcircled{$c$} & $b$ & \bcircled{$d$}\\
    			\hline
    		\end{tabular}
    	\end{center}
        
        \noindent{\bf Round $1$ of $\mam{}(N,M,R)$:} the \am{} mechanism is executed on the matching instance $(N=\{1,2\},M^1=\{a,b,c,d\})$ involving all items and both agents and outputs the assignment $A^1:1\gets(a), 2\gets(c)$ as follows.
        \begin{itemize}[wide,label=-,labelindent=0em,topsep=0em,itemsep=0em]
        \item At round $1$ of $\am{}(N,M^1,R)$, both agents $1$ and $2$ apply for their top-ranked item $a$ among items in $M^1$ and are in the applicant set $N_a$. Suppose that agent $1$ is chosen to receive $a$ and therefore removed from the current round of \mam{}. 
        \item At round $2$ of $\am{}(N,M^1,R)$, agent $2$ applies for her top item $c$ among the remaining items $\{b,c,d\}$ and gets it.
        \end{itemize}

        \noindent{\bf Round $2$ of $\mam{}(N,M,R)$:} \am{} is executed on the matching instance $(N=\{1,2\},M^2=\{b,d\})$ and outputs $A^2:1\gets(b), 2\gets(d)$. 
        \begin{itemize}[wide,label=-,labelindent=0em,topsep=0em,itemsep=0em]
        \item At round $1$ of $\am{}(N,M^2,R)$, agents $1$ and $2$ both prefer $b$ over $d$ and apply for $b$. Now suppose agent $1$ gets $b$.
        \item At round $2$ of $\am{}(N,M^2,R)$, agent $2$ applies for and is allocated the only remaining item $d$. 
        \end{itemize}
        
        Together the execution above outputs the \das{} $A:~1\gets(a,b),2\gets(c,d)$.
     \hfill $\square$
    \end{example}

    Notice that each of the matchings $A^c=\am{}(N,M^c, R)$ satisfy \fhcr{} according to Theorem~1 in our previous work~\cite{guo2021favoring}. The first-choice maximality of $A$ follows due to the matching $A^1$.

    \begin{restatable}{cor}{clmthmmam1}\label{clm:thm:mam:1}
         Given any preference profile $R$ and assignment $\mam{}(R)=\sum_{c=1}^{\lceil{m/n}\rceil}A^c$, for each $c\in\{1,\dots,\lceil{m/n}\rceil\}$, $A^{c}$  satisfies \fhcr{} for the matching problem $(N,M^c)$.
    \end{restatable}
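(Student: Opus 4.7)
The plan is to observe that the corollary follows almost immediately from the construction of \mam{} in \Cref{alg:mam} together with Theorem 1 of our previous work~\cite{guo2021favoring}, which already establishes that \am{} produces an \fhcr{} matching on any instance of the matching problem. So the real content of the argument is just to verify that, for each round $c$, the inner block of \Cref{alg:mam} is nothing other than a faithful execution of \am{} on the matching instance $(N, M^c)$ with preferences given by the restriction of $R$ to $M^c$.

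Concretely, I would proceed as follows. Fix an arbitrary $c\in\{1,\dots,\lceil m/n\rceil\}$. At the start of round $c$, $M^c$ is defined (line 4) as the set of items not yet allocated in any previous round of \mam{}, and the set of active agents is reset to $N'=N$. I would then walk through the inner \textbf{while} loop (lines 6--13) and check that each iteration matches exactly one round of \am{} on $(N,M^c)$: every active agent $j\in N'$ implicitly points to its top-ranked available item according to $\succ_j$ restricted to $M^c$; for each item $o$, the candidate set $N_o$ collects exactly those active agents whose most-preferred item in the current remaining set is $o$; a unique $j_o$ is drawn uniformly at random from $N_o$; the pair $(j_o,o)$ is committed in $A^c$; and both $j_o$ and $o$ are dropped from $N'$ and $M^c$. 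This is precisely the definition of one round of \am{}, and the loop terminates under the same stopping condition ($N'=\emptyset$ or $M^c=\emptyset$). Hence $A^c = \am{}(N, M^c, R)$, and applying Theorem 1 of~\cite{guo2021favoring} to this matching instance yields that $A^c$ satisfies \fhcr{} on $(N,M^c)$.

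The main (and fairly mild) obstacle is bookkeeping around the fact that $R$ specifies preferences over $M$ rather than over $M^c$. I would address this by observing that any strict linear order $\succ_j$ on $M$ induces a strict linear order on any subset $M^c\subseteq M$, and that \fhcr{} is defined purely in terms of the relative ranks $\tp{j}{\cdot}$ of items within the currently remaining set --- which is exactly the information \Cref{alg:mam} reads from $R$ via $\rk{j}{\cdot}$ in line 8. No round of \mam{} prior to $c$ can alter these relative ranks among items in $M^c$, so the restriction is well-defined and the result of~\cite{guo2021favoring} applies verbatim. Since $c$ was arbitrary, the conclusion holds for every $c\in\{1,\dots,\lceil m/n\rceil\}$.
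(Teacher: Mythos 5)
Your proposal is correct and follows the same route as the paper: the paper treats this as an immediate corollary of Theorem~1 of \cite{guo2021favoring}, since by construction each round $c$ of \mam{} simply runs \am{} on the matching instance $(N,M^c)$ with the preferences of $R$ restricted to $M^c$. Your additional line-by-line verification of the inner loop and the remark that restricting a linear order to a subset is well-defined are exactly the (implicit) bookkeeping the paper relies on.
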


    \Cref{clm:mfhcr} shows that given a round  of \mam{} , every agent prefers the item they are allocated in that round to all the items allocated in any subsequent round. \efo{} follows from this construction.

    \begin{restatable}{lem}{clmmfhcr}{}\label{clm:mfhcr}   
    Given any preference profile $R$ and assignment $\mam{}(R)=\sum_{c=1}^{\lceil{m/n}\rceil}A^c$, for any pair of agents $j,k$ and integer $c\in\{1,\dots,\lceil{m/n}\rceil-1\}$, $A^c(j)\succ_j A^{c+1}(k)$.
    \end{restatable}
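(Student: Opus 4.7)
The plan is to exploit the ``eager'' selection rule of \am{}: when agent $j$ receives $A^c(j)$ during round $c$ of \mam{}, $A^c(j)$ must be $j$'s top-ranked item among the items still available in \am{}'s execution at that moment, and every item surviving into round $c+1$ was among those available items.

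First, I would establish that $|M^c| \geq n$ for every $c \leq \lceil m/n\rceil - 1$, and hence that every agent is matched in round $c$. A straightforward induction gives $|M^c| = m - (c-1)n \geq n$ in the stated range, provided each earlier round consumes exactly $n$ items. For the latter, I would inspect \am{}'s inner while loop and argue that whenever $N'$ and $M^c$ are both non-empty, at least one item attracts an applicant (each remaining agent's own top item in $M^c$ does), so at least one agent leaves $N'$ per iteration; when $|M^c| \geq n$, the loop therefore cannot terminate until $N' = \emptyset$.

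Next, I would zoom in on the precise iteration $r$ of that inner loop at which $j$ is selected, and let $M^{c,r} \subseteq M^c$ denote the items still present at the start of iteration $r$. Membership of $j$ in the applicant set for $A^c(j)$ forces $A^c(j)$ to be $j$'s top-ranked item in $M^{c,r}$. Because items in $M^{c+1}$ are exactly those not allocated anywhere in round $c$, they persisted throughout the entire round-$c$ execution, so $M^{c+1} \subseteq M^{c,r}$. Thus $A^c(j)$ ranks at least as high as every element of $M^{c+1}$ under $\succ_j$; strictness follows because $A^c(j) \in M^c \setminus M^{c+1}$. Since any non-empty $A^{c+1}(k)$ lies in $M^{c+1}$, we obtain $A^c(j) \succ_j A^{c+1}(k)$.

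I expect the main obstacle to lie in the bookkeeping connecting \am{}'s shrinking local state to \mam{}'s outer set $M^{c+1}$, which is only defined after round $c$ has fully concluded. Once $M^{c+1} \subseteq M^{c,r}$ is in hand and the ``all $n$ agents are matched per round'' invariant is established, the strict preference drops out directly from the eager selection rule.
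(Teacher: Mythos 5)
Your proof is correct and follows essentially the same route as the paper's: both arguments reduce to the observation that $A^c(j)$ is $j$'s top-ranked item among the items still available at the moment $j$ is matched within round $c$, while every item of $M^{c+1}$, being never allocated in round $c$, remains available throughout that round. The only real difference is presentational—the paper extracts this fact from the \fhcr{} property of $A^c$ (\Cref{clm:thm:mam:1}) and its sets $T_{A^c,r}$ rather than re-inspecting the algorithm's inner loop, and it leaves implicit what you verify explicitly, namely that every agent is matched in each round $c\le\lceil m/n\rceil-1$.
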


    With \cref{clm:thm:mam:1,clm:mfhcr}, we present the following theorem to show the properties of \mam{}.

    \begin{restatable}{thm}{thm:mam}{}\label{thm:mam}
		The generalized eager Boston mechanism satisfies ex-post \opt{}, ex-post \sdefo{},  ex-post \fcm{}, and \sdwef{}.
	\end{restatable}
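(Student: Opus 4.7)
My plan is to verify each of the four properties in turn, using Corollary~\ref{clm:thm:mam:1} (every matching $A^c$ is $\fhcr{}$, hence Pareto-efficient as a matching) and Lemma~\ref{clm:mfhcr} (which iterates to give $A^c(j)\succ_j A^{c'}(k)$ for every pair of agents $j,k$ and rounds $c<c'$) as the unifying tools. For ex-post $\opt{}$, fix any realization $A=\sum_c A^c$ and apply Lemma~\ref{lem:opt}: a cycle $o_1\to\cdots\to o_k\to o_1$, witnessed by $o_i\in A(j_i)$ and $o_{i+1}\succ_{j_i}o_i$, would (by the iterated form of Lemma~\ref{clm:mfhcr}) have non-increasing rounds along every edge, forcing all rounds equal to a common $c^*$; the cycle would then sit inside the single matching $A^{c^*}$, contradicting its efficiency. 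For ex-post $\fcm{}$, the first inner iteration of $\am{}(N,M,R)$ in round $c=1$ assigns every item $o$ with $N_o\neq\emptyset$ to some member of $N_o$, so every first-ranked item is placed with one of its first-choice claimants.

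For ex-post $\sdefo{}$, fix $j,k$. Because $M^1=M\neq\emptyset$ and every agent is matched in rounds $1,\ldots,\lceil m/n\rceil-1$, the item $b_1:=A^1(k)$ exists; I remove it from $A(k)$. For each later round $c\ge 2$ in which $k$ receives $b_c:=A^c(k)$, pair $b_c$ with $a_{c-1}:=A^{c-1}(j)$ (which also exists). Lemma~\ref{clm:mfhcr} gives $a_{c-1}\succ_j b_c$, so for every threshold in $\succ_j$, the element paired to any $b_c$ above the threshold is an $a_{c-1}$ that is even higher; this injective pairing yields $A(j)\sd{j}A(k)\setminus\{b_1\}$.

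For $\sdwef{}$, I will induct on the round index. The base case is the $\sdwef{}$ of $\am{}$ on $(N,M^1,R)$, inherited from the authors' earlier work on the eager Boston mechanism. For the inductive step, condition on the history $\omega_{\le c}$ of rounds $1,\ldots,c$: round $c+1$ is then an $\am{}$ run on $(N,M^{c+1}(\omega_{\le c}),R)$ and is $\sdwef{}$ conditionally. Lemma~\ref{clm:mfhcr} places every round-$(c+1)$ share strictly below, in $\succ_j$, every share accumulated in rounds $1,\ldots,c$, so if $P(k)\sd{j}P(j)$ were strict at some threshold in the cumulative assignment, the gap would have to localize to a single round, contradicting either the inductive hypothesis (threshold above round-$(c+1)$'s level) or the conditional round-$(c+1)$ $\sdwef{}$ after integration over $\omega_{\le c}$ (threshold below). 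The main obstacle will be this integration step, since conditional $\sdwef{}$ need not imply unconditional $\sdwef{}$ in general; the plan is to leverage the round-ordering from Lemma~\ref{clm:mfhcr} to keep per-round strictness separated across threshold regions, so that a strict envy gap at any threshold cannot be ``averaged away'' across histories.
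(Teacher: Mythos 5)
Your treatments of ex-post \opt{}, \fcm{}, and \sdefo{} are correct and essentially identical to the paper's: the ``cycle forces non-increasing round indices, hence collapses into a single matching $A^{c^*}$'' argument is the paper's induction in a different packaging, the \fcm{} argument via the first iteration of $\am{}$ in round $1$ matches the paper's appeal to \fhcr{} of $A^1$, and your injective pairing for \sdefo{} is just the paper's one-line consequence of \Cref{clm:mfhcr} spelled out. No issues there.

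The \sdwef{} part, however, has a genuine gap, and it is exactly the one you flag. Two things break. First, the proposed localization of a strict sd-gap to a single round does not work: the statement ``every round-$(c+1)$ share lies strictly below every round-$c$ share in $\succ_j$'' holds realization by realization, but the boundary between rounds in $\succ_j$ depends on the realized history (the sets $M^c$ are random), so the \emph{expected} per-round allocations $P^c(j)$ are not supported on disjoint prefixes of $\succ_j$, and a strict gap in the cumulative sd-comparison need not localize to any one round. Second, even granting localization, per-round \emph{conditional} \sdwef{} does not integrate over histories, as you concede. The paper avoids both problems by proving a strictly stronger, summation-closed invariant: for every history $A^{<c}$ and every pair $j,k$, the conditional expected round-$c$ allocations satisfy $Q^c_{|A^{<c}}(j)=Q^c_{|A^{<c}}(k)$ or $Q^c_{|A^{<c}}(j)\ld{j}Q^c_{|A^{<c}}(k)$ --- established by the case analysis that at each inner round of $\am{}$ either $j$ and $k$ apply for the same item (equal shares) or $k$ applies for a less $\succ_j$-preferred item and gets zero share of $j$'s target. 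Because this lexicographic dominance always points the same way, \Cref{clm:mam} shows it survives summation over histories and over rounds, yielding $P(j)\ld{j}P(k)$ whenever $P(j)\neq P(k)$, which is incompatible with $P(k)\sd{j}P(j)$ unless $P(j)=P(k)$. To complete your proof you would need to replace the ``conditional \sdwef{} plus localization'' plan with this lexicographic-dominance invariant (or something equivalently closed under convex combination).
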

    \begin{proof}
        Given any preference profile $R$, let $A=\mam{}(R)$.

        \paragraph{\opt{}:}
        We prove by mathematical induction that for any $c\in\{1,\dots,\lceil m/n\rceil\}$, assignment $\sum_{c'\le c}A^{c'}$ satisfies \opt{}.
       
        \noindent{\em Base case: }For $c=1$, $A^1$ satisfies \fhcr{} by~\Cref{clm:thm:mam:1}.
        Since \fhcr{} implies \opt{} (Proposition 2 of \citeay{guo2021favoring}), we have that $A^1$ trivially satisfies \opt{}.

        \noindent{\em Inductive step: }
        For any $c>1$, given that $A' = \sum_{c'<c}A^{c'}$ satisfies \opt{}, we prove that $A = \sum_{c\le c}A^{c}$ also satisfies \opt{}.
        Suppose for the sake of contradiction that there exists a cycle in $A$ by~\Cref{lem:opt}.
        By the given condition and~\Cref{clm:thm:mam:1}, we know that $A'$ and $A^{c}$  satisfies \opt{} and therefore there is no cycle in any one of them, which means that the cycle in $A$ must involve both $A'$ and $A^{c}$.
        Therefore there must exist a pair of items $o=A^{c'}(j)$ and $o'=A(k)^{c}$ such that $o'\succ_j o$ for some agents $j,k$ and $c'<{c}$, which contradicts~\Cref{clm:mfhcr} which implies that $A^{c'}(j)\succ_j A^{c}(k)$.

        \paragraph{\sdefo{}:}
        By~\Cref{clm:mfhcr}, $A^c(j)\succ_j A^{c+1}(k)$ for any agents $j$ and $k$.
        In this way, $A(j)\sd{j}A(k)\setminus\{A^1(k)\}$.

        \paragraph{\fcm{}:}
        By \Cref{clm:thm:mam:1}, we know $A^1$ satisfies \fhcr{} for the assignment problem $(N,M,R)$.
        For any item $o\in M$, $o$ is ranked top by some agent if and only if $o\in\tps{A^1}{1}$. 
        Then for any such item $o$ and for agent $j$ with $A(j)=o$, we have $o=\tp{j}{M}$ by the definition of \fhcr{}, which implies $A$ satisfies \fcm{}.

        \paragraph{\sdwef{}:} Let $P=\mathbb{E}($\mam{}$(R))$. Recall that $A^c$ is defined to be the outcome of $\am{}(N,M^c)$ in round $c$ in the execution of \mam{}.
        Let $\ma^{<c}$ be the set of all the possible intermediate outcomes of first  $c-1$ rounds of \mam{}. For each $A^{<c}\in\ma^{<c}$, let $\alpha(A^{<c})$ be the probability that $A^{<c}$ is output after the first $c-1$ rounds. 
        We define $\alpha(A^c\mid A^{<c})$ to be the probability that $\am{}(N,M^c)=A^c$ is the matching output by \am{} in round $c$ of \mam{} given $A^{<c}$ as the intermediate outcome of the first $c-1$ rounds of \mam{}. 
        Let $\ma^c_{|A^{<c}}$ be the set of all possible matchings given the instance $(N,M^c)$. Let $P^c$ be the expected assignment computed at the end of round $c$ of \mam{}. Then, we have $P=\sum_{c=1}^{\lceil m/n\rceil}P^c$ and:
        \begin{equation}\label{eq:thm:mam:2}
            P^c=\sum\limits_{A^{<c}{}\in\ma^{<c}}\sum\limits_{A^c\in\ma^c_{|A^{<c}}}^{}\alpha(A^c)*\alpha(A^c|A^{<c})*A^c.
        \end{equation}
        
        We show that for each $c\in\{1,\dots,\lceil m/n\rceil\}$, it holds for any pair of agents $j,k\in N$ that if $P^c(j)\neq P^c(k)$ then $P^c(j)\ld{j} P^c(k)$.

        By Eq~(\ref{eq:thm:mam:2}), in an arbitrary round $c$ of \mam{}, for any assignment $A^{<c}$ computed in the first $c-1$ rounds of \mam{}, let $Q^c_{|A^{<c}}=\sum_{A^c\in\ma^c_{|A^{<c}}}\alpha(A^c\mid A^{<c})*A^c$ denote the expected output of $\am{}(N,M^c)$ given $A^{<c}$, where $M^c=\{o\in M\mid\sum_{j\in N}A^{<c}(j,o)=0\}$.

        We first show that:
        \begin{equation}\label{eq:thm:mam:1}
            Q^{c}_{|A^{<c}}(j)\ld{j} Q^{c}_{|A^{<c}}(k)\text{ if }Q^{c}_{|A^{<c}}(j)\neq Q^{c}_{|A^{<c}}(k).
        \end{equation}
        At any round $r$ of \am{} within round $c$ of \mam{}, let $o$ be the item that agent $j$ applies for.
        There are two cases according to the item agent $k$ applies for:
        \begin{enumerate*}[label=(Case \arabic*)]
            \item Suppose agent $k$ applies for the same item as $j$. Then they have the same probability to get item $o$, i.e. $Q^{c}_{|A^{<c}}(j,o)=Q^{c}_{|A^{<c}}(k,o)$ according to line~9 of \Cref{alg:mam}.
            \item Suppose agent $k$ applies for a different item $o'$. Then $o\succ_j o'$ and agent $k$ has no chance of receiving item $o$, since item $o$ must be allocated to one of the agents that applied for $o$ in round $r$ of \am{}, and it follows that $Q^{c}_{|A^{<c}}(j,o)>Q^{c}_{|A^{<c}}(k,o)=0$.
        \end{enumerate*}
        Together, both cases leads to Eq~(\ref{eq:thm:mam:1}).

        By Eq~(\ref{eq:thm:mam:2}), we have that $P^c=\sum_{A^{<c}\in\ma^{<c}}\alpha(A^{<c})*Q^c_{|A^{<c}}$. Together with \Cref{clm:mam} below, we have by Eq~(\ref{eq:thm:mam:1}) that $P^{c}(j)\ld{j} P^{c}(k)$ if $P^{c}(j)\neq P^{c}(k)$. 
        
        \begin{restatable}{claim}{clmgam}{}\label{clm:mam}
            Given random allocations $p_1,\dots,p_s$ and $q_1,\dots,q_s$ with $p_{i}\ld{} q_{i}$ or $p_i=q_i$ for any integer $i\in [1,s]$.
            If $\sum_{i=1}^sp_{i}\neq\sum_{i=1}^sq_{i}$, then $\sum_{i=1}^sp_{i}\ld{}\sum_{i=1}^sq_{i}$ .
        \end{restatable}

        By $P=\sum_{c=1}^{\lceil m/n\rceil}P^c$ and~\Cref{clm:mam}, we have that $P(j)\ld{j}P(k)$ when $P(j)\neq P(k)$.
        It follows that if $P(k)\sd{j}P(j)$, then $P(j)\ld{j}P(k)$ does not hold and therefore $P(j)=P(k)$, which means that  $P$ satisfies \sdwef{}.
        \end{proof}

        \begin{remark}\label{rmk:mam}\rm
        \mam{} does not satisfy \sdopt{} and \sdef{}.
        For the matching problem, \mam{} executes lines~3-13 once and therefore it is equivalent to \am{}.
        According to Proposition~15 of~\citeay{guo2021favoring}, \am{} does not satisfy \sdopt{} and \sdef{}, and therefore neither does \mam{}.
        \end{remark}
	
\section{Generalized Probabilistic Boston Mechanism}
     In this section, we propose \fmpbm{} (\mpbm{}) mechanism. 
     \mpbm{} also satisfies  \fcm{} and  \efo{} ex-post, and additionally provides an ex-ante efficiency guarantee of \sdopt{} as we show in \Cref{thm:mpbm} later, but it does not provide an ex-ante fairness guarantee (\Cref{rmk:mpbm}). In comparison, \bm{} cannot satisfy \efo{} since \fhr{} is not compatible with \efo{} (See~\Cref{prop:imp_fhr} in Appendix, and \mam{} satisfies \sdwef{} but not \sdopt{}.
    
\begin{algorithm}[htb]
    \begin{algorithmic}[1]
        \State {\bf Input:} An assignment problem $(N,M)$ and a strict linear preference profile $R$.
        \State For each $o\in M$, $s(o)\gets 1$. $M'\gets M$. $c\gets0$.
        \While{$M'\neq\emptyset$}\hfill// {\texttt Round $c$ of \mpbm{}}
            \State $c\gets c+1$. $P^c\gets 0^{n\times m}$. $N'\gets N$. $r\gets 1$.
            \While{$M'\neq\emptyset$ and $N'\neq\emptyset$}\hfill// {\texttt Consumption round $r$}
                \For{each $o\in M'$}
                    \State $N_o\gets\{j\in N'\mid\rk{j}{o}=r\}$.
                    \State \multiline{
                        Each agent $j\in N_o$ consumes $o$ at an equal rate. \\
                        Let $\delta_j$ be the amount of item $o$ consumed by $j$.
                        \begin{enumerate}[label={\footnotesize 8.\arabic*:},topsep=0pt]
                            \item  Agent $j$ stops consumption when either
                            \begin{itemize}[label=-,topsep=0pt]
                                \item $\sum_{o'\in\ucs(j,o)}P^c(j,o')+\delta_j=1$, or
                                \item $\bigcup_{k\in N_o}\delta_k=s(o)$.
                            \end{itemize} 
                            \item $P^c(j,o)\gets \delta_j$, and $s(o) \gets s(o) - \bigcup_{k\in N_o}\delta_k$.
                        \end{enumerate}
                    }
                    \vspace{-1em}
                    \State \multiline{
                        $M'\gets M'\setminus\{o\in M'\mid s(o)=0\}$.
                        $N'\gets N'\setminus\{j\in N'\mid \sum_{o'\in\ucs(j,o)}P^c(j,o')=1\}$.
                    }
                \EndFor
                \State $r\gets r+1$.
            \EndWhile
        \EndWhile
        
        \State \Return $\sum\limits_{c=1}^{\lceil{m/n}\rceil}P^c$.
    \end{algorithmic}
    \caption{\label{alg:mpbm} Generalized Probabilistic Boston Mechanism (\mpbm)}
\end{algorithm}

\mpbm{}, defined in \Cref{alg:mpbm}, proceeds by assigning the $m$ distinct indivisible items in $\lceil{m/n}\rceil$ rounds. 
Each item $o$ initially has $s(o)=1$ unit of supply to be consumed.
In each round $c$ (lines~3-10 of~\Cref{alg:mpbm}), each agent is allowed to consume, i.e., be allocated, at most one unit of items cumulatively over multiple consumption rounds as follows. In each consumption round $r$ (lines~5-10), for each item $o$, all of the agents who rank item $o$ in position $r$ over all items, represented by the set $N_o$ in line~7, consume item $o$ at an equal rate. An agent $j\in N_o$  quits consuming $o$ when either the supply of item $o$ is exhausted, or $j$ has cumulatively consumed one unit of items in round $c$ of \mpbm{}. \mpbm{} terminates when all the items are consumed to exhaustion of their supply and returns the probability shares of items that agents consume during execution. We use $P^c$ to refer to the random assignment computed at the end of each round $c\in\{1,\dots,\lceil m/n\rceil\}$ of \mpbm{}.

\begin{example}\label{eg:mpbm}\rm
    We use the instance in~\Cref{eg:mam} to illustrate the execution of \mpbm{} in~\Cref{fig:mpbm}.

    \begin{figure}[h]
        \centering
        \includegraphics[width=\linewidth]{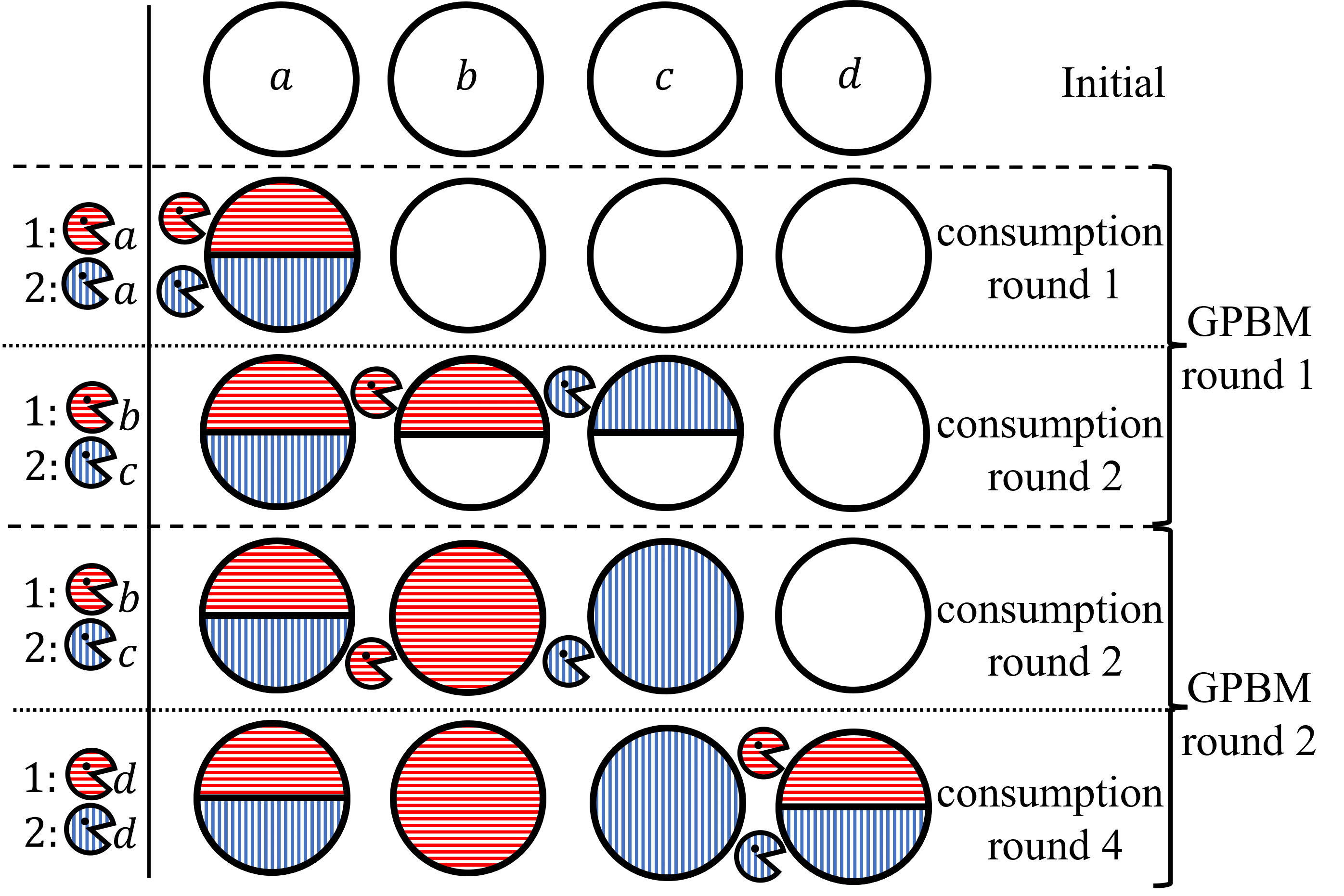}
        \caption{An example of the execution of \mpbm{}.}
        \label{fig:mpbm}
    \end{figure}

    \noindent{\bf Round $1$ of \mpbm{}:}  $P^1$ is generated as follows.
    \begin{itemize}[wide,label=-,labelindent=0em,itemsep=0pt,topsep=0pt]
        \item At consumption round $1$, agents $1$ and $2$ both consume item $a$ at an equal rate, and therefore $P^1(1,a)=P^1(2,a)=1/2$.
        \item At consumption round $2$, agent $1$ consumes item $b$ while agent $2$ consumes $c$, which results in $P^1(1,b)=P^1(2,c)=1/2$. Notice that $b$ and $c$ are not exhausted, but consumption stops since agents $1$ and $2$ have each cumulatively consumed one unit.
    \end{itemize}

    \noindent{\bf Round $2$ of \mpbm{}:} 
    We omit the consumption rounds that do not allocate any items in generating $P^2$.
    \begin{itemize}[wide,label=-,labelindent=0em,itemsep=0pt,topsep=0pt]
        \item At consumption round $2$, agent $1$ consumes item $b$ while agent $2$ consumes $c$, which results in $P^2(1,b)=P^2(2,c)=1/2$.
        \item At consumption round $4$, agents $1$ and $2$ both consume item $d$ and split it equally, i.e., $P^2(1,d)=P^2(2,d)=1/2$.
    \end{itemize}
 Then we obtain $P^1$ and $P^2$ in the following:
 
\noindent
\begin{minipage}{\linewidth}
    \begin{minipage}{0.4\linewidth}
    \begin{tabular}{c|cccc}
        \multicolumn{5}{c}{Assignment $P^1$}\\
        &  $a$ & $b$ & $c$ & $d$\\\hline
        1 & $1/2$ & $1/2$ & $0$ & $0$\\
        2 & $1/2$ & $0$ & $1/2$ & $0$ \\
    \end{tabular}
    \end{minipage}
    \hspace{0.1\linewidth}
    \begin{minipage}{0.4\linewidth}
    \centering
    \begin{tabular}{c|cccc}
        \multicolumn{5}{c}{Assignment $P^2$}\\
        &  $a$ & $b$ & $c$ & $d$\\\hline
        1 & $0$ & $1/2$ & $0$ & $1/2$\\
        2 & $0$ & $0$ & $1/2$ & $1/2$ \\
    \end{tabular}
    \end{minipage}
\end{minipage}
\end{example}

    We show in \Cref{clm:gpbm} that the random assignment $P^c$ obtained at the end of each round $c$ of \mpbm{} satisfies \sdopt{}, which will be instrumental in proving  that the output of \mpbm{} always satisfies \sdopt{} in \Cref{thm:mpbm}.

    \begin{restatable}{lem}{clmgpbm}{}\label{clm:gpbm}
        Given any preference profile $R$, for every $c\in\{1,\dots,\lceil m/n\rceil\}$, the assignment $P^c$ computed at the end of round $c$ of \mpbm{} satisfies \sdopt{}.
       
    \end{restatable}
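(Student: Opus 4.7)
The plan is to apply Lemma~\ref{lem:sdopt}, which reduces \sdopt{} of $P^c$ to acyclicity of the binary relation $G = \{(o, o') \in M \times M \mid \exists j \in N \text{ with } o' \succ_j o \text{ and } P^c(j, o) > 0\}$. I will suppose for contradiction that $G$ contains a cycle $o_1 \to o_2 \to \cdots \to o_k \to o_1$ witnessed by agents $j_1, \ldots, j_k$ with $o_{i+1} \succ_{j_i} o_i$ and $P^c(j_i, o_i) > 0$ for every $i$ (indices mod $k$), and derive a contradiction by tracking when each $o_i$ is removed from $M'$ during round $c$ of \mpbm{}.

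The key structural observation is that, by the design of Algorithm~\ref{alg:mpbm}, whenever $P^c(j_i, o_i) > 0$, agent $j_i$ consumes $o_i$ during the single consumption round $r_i := \rk{j_i}{o_i}$ of round $c$, since that is the unique round in which $j_i$ can belong to $N_{o_i}$. In particular, both $j_i \in N'$ and $o_i \in M'$ hold at the start of round $r_i$, and $j_i$ has not yet filled its unit cap for round $c$ prior to $r_i$. Let $t_i \in \{1,2,\ldots\} \cup \{\infty\}$ denote the consumption round at which $o_i$ is removed from $M'$ in round $c$ (with $t_i = \infty$ if $o_i$ survives round $c$); then clearly $t_i \ge r_i$.

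Now set $r^*_{i+1} := \rk{j_i}{o_{i+1}}$, and observe $r^*_{i+1} < r_i$ because $o_{i+1} \succ_{j_i} o_i$. Since $j_i \in N'$ at the later round $r_i$ and $N'$ only shrinks during round $c$, $j_i$ is also in $N'$ at the earlier round $r^*_{i+1}$. If $o_{i+1}$ were still in $M'$ at the start of round $r^*_{i+1}$, then $j_i$ would belong to $N_{o_{i+1}}$ and consume there; consumption halts either when $s(o_{i+1})$ hits $0$ (so $t_{i+1} = r^*_{i+1}$) or when $j_i$ reaches its unit cap, the latter case removing $j_i$ from $N'$ by line~9 of Algorithm~\ref{alg:mpbm} and hence contradicting $P^c(j_i, o_i) > 0$ at the later round $r_i$. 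Either way, $t_{i+1} \le r^*_{i+1} < r_i \le t_i$, so $t_{i+1} < t_i$ strictly. Chaining around the cycle yields $t_1 > t_2 > \cdots > t_k > t_1$, which is impossible (treating $\infty$ as the maximal element), giving the desired contradiction. The step I expect to demand the most care in the formal write-up is precisely this dichotomy: tying ``$j_i$ reaches its per-round cap while consuming $o_{i+1}$'' tightly to ``$j_i$ is removed from $N'$ and hence cannot consume anything at a later round,'' since this is exactly what forces the strict inequality $t_{i+1} < t_i$ that closes the cyclic descent.
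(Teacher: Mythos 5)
Your proof is correct and follows essentially the same route as the paper's: for each edge of the putative cycle, the more-preferred item must be exhausted strictly before the consumption round in which the agent picks up the less-preferred one (else the agent would have hit its unit cap and been removed from $N'$), and chaining this strict ordering of exhaustion times around the cycle yields the contradiction. Your write-up is merely more explicit than the paper's in indexing the cycle and in treating the case where an item survives the round via $t_i=\infty$.
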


An \das{} can be generated from the output $P$ of \mpbm{} using the algorithm suggested by \citeay{aziz2020simultaneously}, which we define in \Cref{alg:generate} in Appendix.
\Cref{alg:generate} proceeds\allowbreak{} by first creating for each agent $j$ the subagents $j^1,\dots,j^{\lceil{m/n}\rceil}$ who have the random allocations $P^1(j),\dots,P^{\lceil{m/n}\rceil}(j)$ respectively. The $n*\lceil m/n\rceil$ subagents thus created then participate in a lottery, and each agent is allocated the items won by all of its subagents, e.g., the \das{} $A$ in~\Cref{eg:mam} is a possible result drawn from $\{P^1,P^2\}$.
Since $\sum_{o\in M}P^c(j,o)\le 1$, each subagent does not obtain more than one item.
For any $A$ drawn from $P$, we define for $c=1,\dots,{\lceil{m/n}\rceil}$ that 
\begin{itemize}[wide,labelindent=0em,itemsep=0pt,topsep=0pt,label=-]
    \item $A^{c}$ to refer to the one-to-one matching  over all the subagents $j^c$ in $A$, and
    \item $M^c=\{o\in M\mid o\neq A^{c'}(j)\text{ for some }c'<c \text{ and }j\in N\}$ to be the set of items that are not allocated in the matching $A^{c'}$ with any $c'<c$.
\end{itemize}
In the following \Cref{clm:thm:mpbm:1}, we prove that each $A^c$ satisfies \fhr{}  and  that each agent prefers the items with positive shares in her allocation in the current round to any item to be allocated in later rounds.

        \begin{restatable}{lem}{clmthmmpbm}\label{clm:thm:mpbm:1}
             Given any preference profile $R$ and $\{A^1,\dots,A^{\lceil{m/n}\rceil}\}$ drawn from $\mpbm{}(R)=\sum_{c=1}^{\lceil{m/n}\rceil}P^c$ by~\Cref{alg:generate}, 
            \begin{enumerate}[label=\rm(\roman*),wide,labelindent=0em,itemsep=0pt,topsep=0pt]
                \item $A^{c}$ satisfies \fhr{} for the assignment problem $(N,M^c)$ for $c=1,\dots,\lceil{m/n}\rceil$;
                \item $A^c(j)\succ_j A^{c+1}(k)$ for $c=1,\dots,\lceil{m/n}\rceil-1$.
            \end{enumerate}
        \end{restatable}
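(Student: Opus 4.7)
My plan is to dispatch part~(ii) by a direct argument based on the consumption semantics of \mpbm{}, and then to prove part~(i) by combining a consistency invariant about \Cref{alg:generate} with the round-by-round structure of the consumption. Throughout I take as a basic property of \Cref{alg:generate} that the decomposition it returns is support-preserving, so that $A^{c}(j)=o$ implies $P^{c}(j,o)>0$ for every round $c$.

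For (ii), let $o=A^c(j)$ and $o'=A^{c+1}(k)$. Since $P^{c+1}(k,o')>0$, the supply $s(o')$ is strictly positive at the end of round $c$; because supply is monotonically non-increasing, $o'\in M'$ at every consumption round inside round $c$. Suppose for contradiction that $o'\succ_j o$, so $\rk{j}{o'}<\rk{j}{o}$. Since $j$ consumed a positive amount of $o$ at consumption round $\rk{j}{o}$, she was in $N'$ then, and a fortiori at the earlier round $\rk{j}{o'}$; by line~7 of \Cref{alg:mpbm} she lies in $N_{o'}$ and starts consuming $o'$. Because $o'$ is not exhausted in round $c$, stopping condition (b) of line~8.1 never fires at this consumption round, so every agent in $N_{o'}$---including $j$---stops via the capacity condition (a). In particular $j$ reaches cumulative consumption equal to one and is removed from $N'$ by line~9 at round $\rk{j}{o'}$, contradicting her continued presence at the later round $\rk{j}{o}$; hence $o\succ_j o'$.

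For (i), I would first establish by induction on $c$ the invariant that $M'=M^c$ at the start of round $c$ of \mpbm{}; this amounts to the property---designed into \Cref{alg:generate}---that the items whose supply is exhausted by the end of round $c-1$ are exactly those allocated in the matchings $A^1,\dots,A^{c-1}$. Given this invariant, items in $M\setminus M^c$ never enter $M'$ during round $c$, so as the consumption round $r$ advances each agent's preference list is traversed in $M$-order while only items in $M^c$ are actually encountered. Each item $o\in M^c$ consumed in round $c$ is first touched by the set $N_o$ at the smallest consumption round $r^\star(o)$ for which some active agent has $\rk{j}{o}=r^\star(o)$, and the decomposition produced by \Cref{alg:generate} can be interpreted as a Boston-style allocation on $(N,M^c)$ in which ties are broken by the internal lottery. \fhr{} on $(N,M^c)$ then follows from the characterization in~\protect\cite{Ramezanian2021:Ex-post} that every Boston-mechanism outcome satisfies \fhr{}.

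The main obstacle I expect is reconciling the two rank conventions in step (i): \mpbm{} indexes consumption by $M$-ranks, whereas \fhr{} must hold with respect to $M^c$. The delicate case is an agent $k$ with $\rk{k}{o,M^c}<\rk{j}{o,M^c}$ but $\rk{k}{o,M}>\rk{j}{o,M}$, so that $k$ has extra items outside $M^c$ ranked above $o$ and reaches $o$ only after $j$ does in the consumption order. In this case one must argue that either $k$ has already been saturated in round $c$ by items strictly preferred (in $M^c$) to $o$ by $k$, or the decomposition inside \Cref{alg:generate} resolves the contest for $o$ in a manner compatible with \fhr{}. Pinning this case analysis down, and thereby specifying the exact invariants that the lottery must satisfy, is where I anticipate most of the technical work to lie.
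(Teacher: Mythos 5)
Your part~(ii) is correct and is essentially the paper's own argument: from $P^{c+1}(k,o')>0$ you get $s(o')>0$ throughout round $c$, so if $o'\succ_j o$ then $j$ would have consumed $o'$ at consumption round $\rk{j}{o'}$ without exhausting it, saturated her unit on $\ucs(\succ_j,o')$, and could not have $P^c(j,o)>0$. The support-preservation property of \Cref{alg:generate} that you assume ($A^c(j)=o\Rightarrow P^c(j,o)>0$) is exactly what the paper uses implicitly, so that part is fine.

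Part~(i), however, has a genuine gap, and the route you sketch does not close it. First, the invariant ``$M'=M^c$ at the start of round $c$'' is false: $M'$ is determined by \emph{fractional} supply exhaustion, whereas $M^c$ depends on the realized lottery of \Cref{alg:generate}. In \Cref{eg:mpbm}, item $b$ has $s(b)=1/2>0$ at the start of round $2$ (so $b\in M'$), yet the draw may set $A^1(1)=b$, in which case $b\notin M^2$. Second, the reduction to ``a Boston-style allocation on $(N,M^c)$'' plus the characterization of \citeay{Ramezanian2021:Ex-post} does not apply: consumption rounds in \mpbm{} are indexed by ranks in $M$, not $M^c$, the consumption is fractional, and the matching $A^c$ is produced by an external decomposition rather than by running \bm{}. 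Third, the ``obstacle'' you flag about reconciling $M$-ranks with $M^c$-ranks is moot: the paper's \fhr{} definition is stated with $\rk{j}{\cdot}$, i.e., ranks in the full set $M$, so there is nothing to reconcile. The intended proof of (i) is the same direct supply/saturation argument you already used for (ii), applied within a single round: if $o=A^c(j)\succ_k A^c(k)=o'$ with $\rk{j}{o}>\rk{k}{o}$, then $P^c(j,o)>0$ forces $s(o)>0$ at the start of consumption round $\rk{j}{o}$, hence $o$ is available (and not exhausted) at the earlier round $\rk{k}{o}$; so $k$ either consumes $o$ there and stops only by saturating $\ucs(\succ_k,o)$, or was already saturated earlier --- in both cases $P^c(k,o')=0$ for $o'\prec_k o$, contradicting $A^c(k)=o'$. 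Until you replace your reduction with an argument of this kind, part~(i) is not proved.
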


    \begin{restatable}{thm}{thm:mpbm}{}\label{thm:mpbm}
		The generalized probabilistic Boston mechanism satisfies ex-post \fcm{}, ex-post \sdefo{}, and \sdopt{}.
	\end{restatable}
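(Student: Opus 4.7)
The plan is to establish each of the three desiderata in turn, relying on \Cref{clm:gpbm} and \Cref{clm:thm:mpbm:1}. Fix any preference profile $R$, write $P=\mpbm{}(R)=\sum_{c=1}^{\lceil m/n\rceil}P^c$, and consider any \das{} $A=\sum_c A^c$ drawn from $P$ via \Cref{alg:generate}, so that each $A^c$ is a matching of $(N,M^c)$; all three arguments start from this decomposition.

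For ex-post \fcm{}, \Cref{clm:thm:mpbm:1}(i) with $c=1$ gives $M^1=M$ and $A^1$ satisfying \fhr{} on the matching instance $(N,M)$. Since \fhr{} implies \fcm{} on matching instances, $A^1$ achieves the largest possible number of agents assigned their top-ranked item in $M$. Every item allocated in $A^1$ is also allocated in $A$, so the first-choice count of $A$ is at least that of $A^1$; and no \das{} can exceed this matching bound, so $A$ satisfies \fcm{}. For ex-post \sdefo{} I mirror the argument in the proof of \Cref{thm:mam}: for any $j,k\in N$ take $o=A^1(k)$, and \Cref{clm:thm:mpbm:1}(ii) gives $A^c(j)\succ_j A^{c+1}(k)$ for every $c\in\{1,\dots,\lceil m/n\rceil-1\}$, so the map $A^{c+1}(k)\mapsto A^c(j)$ injects $A(k)\setminus\{A^1(k)\}=\bigcup_{c\ge 2}A^c(k)$ into $A(j)$ while strictly improving each matched item under $\succ_j$. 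The injection is well defined because every non-final round allocates one full unit to every agent, so $A^c(j)\ne\emptyset$ whenever $A^{c+1}(k)\ne\emptyset$; hence $A(j)\sd{j}A(k)\setminus\{o\}$.

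For \sdopt{} I invoke the acyclicity characterisation of \Cref{lem:sdopt}. Assume for contradiction that the relation $\{(o,o')\in M\times M\mid\exists j\in N,\ o'\succ_j o,\ P(j,o)>0\}$ admits a cycle $o_1\to\cdots\to o_l\to o_1$ witnessed by agents $j_1,\dots,j_l$, and for each $i$ pick a round $c_i$ with $P^{c_i}(j_i,o_i)>0$. The crux is to prove that $c_i\ge c_{i+1}$ for every $i$ (indices mod $l$). If instead $c_i<c_{i+1}$, then since each $P^c$ is a deterministic function of $R$, the per-round decompositions supplied by \Cref{alg:generate} can be sampled independently, and one may realise a joint sample with $A^{c_i}(j_i)=o_i$ and $A^{c_{i+1}}(j_{i+1})=o_{i+1}$. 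Transitive application of \Cref{clm:thm:mpbm:1}(ii) along rounds $c_i,c_i+1,\dots,c_{i+1}$---chaining through $j_i$'s preferences, with intermediate $A^{c_i+s}(j_i)\ne\emptyset$ since every non-final round is fully consumed---then yields $o_i\succ_{j_i}o_{i+1}$, contradicting the cycle edge $o_{i+1}\succ_{j_i}o_i$. Traversing the cycle collapses all $c_i$ to a common value $c$, placing the whole cycle inside the analogous relation for $P^c$ and contradicting the \sdopt{} of $P^c$ guaranteed by \Cref{clm:gpbm}. The main obstacle is exactly this cross-round step---justifying the joint sampling and iterating \Cref{clm:thm:mpbm:1}(ii) via $j_i$; once those are in place, the collapse to a single round and the resulting contradiction are immediate.
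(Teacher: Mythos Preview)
Your arguments for ex-post \fcm{} and ex-post \sdefo{} are correct and track the paper's proof closely.

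For \sdopt{}, your cycle-collapsing strategy (show $c_i\ge c_{i+1}$ for every $i$, forcing all rounds to coincide and then invoking \Cref{clm:gpbm}) is sound and is essentially a reorganisation of the paper's induction. However, the step in which you derive $c_i\ge c_{i+1}$ has a genuine gap. The claim that ``the per-round decompositions supplied by \Cref{alg:generate} can be sampled independently'' is false: an item may be partially consumed in several rounds of \mpbm{} (e.g.\ item $b$ in \Cref{eg:mpbm} appears with positive share in both $P^1$ and $P^2$), so sampling each $A^c$ independently can double-allocate that item and need not correspond to any valid draw. Moreover, \Cref{alg:generate} produces a single permutation from the joint bistochastic matrix $Q$, and for general bistochastic matrices two positive entries $Q(a,x),Q(b,y)>0$ in distinct rows and columns are \emph{not} always jointly realisable by a permutation in the support: the $3\times 3$ matrix with zeros on the diagonal and $1/2$ off-diagonal has $Q(1,2),Q(2,1)>0$, yet the only permutation mapping $1\mapsto 2$ and $2\mapsto 1$ sends $3\mapsto 3$, where $Q(3,3)=0$. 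Hence you have not justified the existence of a single draw with $A^{c_i}(j_i)=o_i$ and $A^{c_{i+1}}(j_{i+1})=o_{i+1}$, and without it the chaining through \Cref{clm:thm:mpbm:1}(ii) does not get off the ground.

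The fix is to bypass the sampling entirely and argue directly about the $P^c$'s, which is precisely what the paper does. What you actually need is the implication: if $P^{c}(j,o)>0$ and $P^{c'}(k,o')>0$ with $c<c'$, then $o\succ_j o'$. This is exactly what the \emph{proof} of \Cref{clm:thm:mpbm:1}(ii) establishes---that argument uses only positivity in $P^c$ and $P^{c+1}$, never the particular draw---and it extends verbatim to arbitrary $c<c'$: since $P^{c'}(k,o')>0$ forces $s(o')>0$ at the start of round $c'$, the supply $s(o')$ is positive throughout round $c$, so agent $j$ must fill her unit in round $c$ on items weakly preferred to $o'$, contradicting $P^c(j,o)>0$ if $o'\succ_j o$. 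With this in hand, your collapse of the $c_i$'s to a common round and the final contradiction with \Cref{clm:gpbm} go through cleanly.
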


    \begin{proof}

        Let $A$ be an \das{} drawn from the distribution represented by $\{P^1,\dots, P^{\lceil{m/n}\rceil}\}$.
        
        \paragraph{\fcm{}:}
        Assume that $A$ does not satisfy \fcm{}. Then there exists an item $o$ such that a set of agents $N'\subseteq N$ prefer $o$ most, but $o$ is allocated to an agent $k\not\in N'$, i.e., $o\in A(k)$. Consider any agent $j\in N'$ and let $o'=A^1(j)$ and $o=A^c(k)$ for some $c$. 
        Since $o\succ_j o'$, it must hold that $c=\rka{k}{o}{A(k)}=1$. Otherwise, if $c>1$, then it means that $o=A^c(k)\succ_j A^1(k)$, a contradiction to \Cref{clm:thm:mpbm:1}~(\romannumeral2).
        However, we have that $\rk{k}{o}>1=\rk{j}{o}$ by the selection of $j$ and $k$, which contradicts the fact that $A^1$ satisfies \fhr{} by~\Cref{clm:thm:mpbm:1}~(\romannumeral1).

        \paragraph{\sdefo{}:}
        By~\Cref{clm:thm:mpbm:1}~(\romannumeral2), $A^c(j)\succ_j A^{c+1}(k)$ for any agents $j$ and $k$. It follows that $A(j)\sd{j} A(k)\setminus\{A^1(k)\}$.
    
        \paragraph{\sdopt{}:}
        Let $P=\mathbb E(\text{\mpbm{}}(R))=\sum_{c\le\lceil{m/n}\rceil}P^c$.
        We prove by mathematical induction that for any $c\in\{1,\dots,\lceil{m/n}\rceil\}$, it holds that $P^{\le c}=\sum_{c'\le c}P^{c'}$ satisfies \sdopt{}.
        
        \noindent{\em Base case: }
        For $c=1$, $P^{\le c}=P^1$ and therefore it satisfies \sdopt{} trivially by~\Cref{clm:gpbm}.

        \noindent{\em Inductive step: }
        Suppose that $P^{\le c}=\sum_{c'\le c}P^{c'}$ satisfies \sdopt{}. 
        Now, assume for the sake of contradiction that for $P^{\le c+1}$, there exists a cycle in the relation in~\Cref{lem:sdopt}. Notice that by \Cref{clm:gpbm}, $P^{c+1}$ satisfies \sdopt{}. Together with the assumption that $P^{\le c}$ satisfies \sdopt{}, this means the cycle must involve items with positive shares in both $P^{\le c}$ and $P^{c+1}$, i.e., there exist items $o,o'$ and a pair of agents $j,k$ involved in the cycle such that:
        \begin{equation}\label{eq:thm:mpbm:1}
            o\succ_k o'\text{, }P'(k,o')>0\text{, and }P^{c+1}({j,o})>0.
        \end{equation}

        By Eq~(\ref{eq:thm:mpbm:1}), it must hold that in round $c+1$ of \mpbm{} when $P^{c+1}$ is generated, agent $j$ consumes the item $o$ that agent $k$ prefers to the item $o'$.
        We also note that agent $k$ consumes $o'$ in a strictly earlier round $c' \le c$ of \mpbm{}.
        By line~9 of \Cref{alg:mpbm}, this implies that $s(o)>0$ at the beginning of round $c'$ of \mpbm{}. Then, by lines~7 and 8, for any item $o''$ consumed by agent $k$, i.e., where $P^{c'}({k,o''})>0$, we have that either $o''\succ_k o$ or $o'' = o$, a contradiction to Eq~(\ref{eq:thm:mpbm:1}). 
        
        Thus by induction, $P=\sum_{c\le\lceil{m/n}\rceil}P^c$ satisfies \sdopt{}.
        \end{proof}

        \begin{remark}\label{rmk:mpbm}\rm
            \mpbm{} does not satisfy \sdwef{}.
            For the assignment problem with following profile $R$, \mpbm{} outputs assignment $P$.
            
\noindent
\begin{minipage}{\linewidth}
    \begin{minipage}{0.3\linewidth}
        \begin{equation*}
                \begin{split}
                &\succ_{1,2}\text{: } a\succ b\succ c\succ d,\\
                &\succ_3\text{: } a\succ_3 d\succ_3 b\succ_3 c,\\
                &\succ_4\text{: } d\succ_4 a\succ_4 b\succ_4 c.\\
                \end{split}
            \end{equation*}
    \end{minipage}
    \hspace{0.12\linewidth}
    \begin{minipage}{0.5\linewidth}
            \begin{tabular}{r|cccc}
                \multicolumn{5}{c}{Assignment $P$}\\
                &  $a$ & $b$ & $c$ & $d$\\\hline
                1,2 & $1/3$ & $1/2$ & $1/6$ & $0$\\
                3 & $1/3$ & $0$ & $2/3$ & $0$\\
                4 & $0$ & $0$ & $0$ & $1$\\
            \end{tabular}
    \end{minipage}
\end{minipage}

\vspace{0.5em}
We see that $\sum_{o'\succ_3 o}P(1,o')=\sum_{o'\succ_3 o}P(3,o')$ for $o\in\{a,c,d\}$ and $\sum_{o'\succ_3 b}P(1,o')=5/6>1/3=\sum_{o'\succ_3 b}P(3,o')$.
It follows that $P(1)\neq P(3)$ and $P(1)\sd{3}P(3)$, which violates \sdwef{}.
        \end{remark}

    \section{An Impossibility Result}

    In \Cref{prop:imp}, we show that for the mechanisms that satisfy \fcm{}, \opt{}, and \efo{} ex-post, they are impossible to guarantee strategyproofness (\sdsp{})  without violating neutrality, which requires that any permutation of the item labels results in an assignment where the items allocated to each agent are permuted in the same manner. Therefore, \mam{} and \mpbm{} trivially satisfy neutrality, and therefore they cannot provide guarantee of strategyproofness.

    \begin{restatable}{prop}{propimp}{}\label{prop:imp}
        There is no \sdsp{} mechanism which simultaneously satisfies  \fcm{}, \opt{}, \sdefo{}, and \ntr{} ex-post.
    \end{restatable}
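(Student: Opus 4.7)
The plan is to prove this impossibility by contradiction. I would suppose that a mechanism $f$ simultaneously satisfies \sdsp{}, ex-post \fcm{}, ex-post \opt{}, ex-post \sdefo{}, and \ntr{}, and then derive a contradiction from a small instance with $n=2$ agents and $m=4$ items. The high-level approach is to (i) identify a base profile whose ex-post PE, \sdefo{}, and FCM support is tightly constrained; (ii) exhibit a unilateral manipulation that shrinks the feasible support in a way that favours the deviator; and (iii) use \ntr{} to pin down probabilities on the manipulated profile, so that an \sdsp{} violation is forced.

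I would start with the symmetric base profile $R$ in which both agents report $a\succ b\succ c\succ d$. Enumerating the \dass{} satisfying \fcm{}, \opt{}, and \sdefo{} at $R$, I find that only the four assignments that pair one of $\{a,b\}$ with one of $\{c,d\}$ for each agent are feasible: the ``$\{a,b\}$ versus $\{c,d\}$'' split is ruled out because the bottom-bundle agent still sd-envies even after removing any one item. Denote their probabilities by $p_1,\ldots,p_4$. Next, I consider agent $2$'s misreport in which she transposes her two lowest-ranked items, reporting $a\succ b\succ d\succ c$. Under the new profile $R'$, any assignment in which agent $1$ holds $d$ and agent $2$ holds $c$ is Pareto-dominated by swapping $c$ and $d$ across the agents (agent $1$ still prefers $c$ while agent $2$ now prefers $d$), so \opt{} eliminates two of the four candidates from the support of $f(R')$, leaving only the two assignments that allocate $d$ to agent~$2$ with probability one; denote those probabilities $q_1, q_3$.

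Finally, I would compute agent $2$'s sd-shares from her true preference $a\succ b\succ c\succ d$ under $f(R)$ and $f(R')$ and compare the prefix sums over $\{a\}$, $\{a,b\}$, $\{a,b,c\}$, $\{a,b,c,d\}$. To pin down $q_1$ and $q_3$, I apply \ntr{} with the transposition $\pi=(c\,d)$, which relates $f(R')$ to $f(R'')$ for the mirror profile $R''$ where agent~$1$ (instead of agent~$2$) transposes $c$ and $d$. Combining this comparison with the analogous \sdsp{} constraints from a small collection of other minimal manipulations (for instance, agent~$2$ swapping $a$ and $b$, and agent~$1$'s mirror manipulations pinned down via \ntr{}) yields a system of linear inequalities over $(p_1,\ldots,p_4)$ that is inconsistent: for every feasible distribution, some manipulation's output strictly sd-dominates the truthful output for the deviating agent. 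The main obstacle is ensuring that the collection of manipulations I analyse is rich enough to rule out every choice of $(p_1,\ldots,p_4)$; in particular, the ``escape route'' in which the mechanism balances probabilities to thwart any single manipulation must be closed off by combining \ntr{} with at least two non-equivalent manipulations, and this is where the technical care in the proof lies.
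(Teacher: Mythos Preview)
Your high-level strategy (small instance, enumerate the ex-post feasible support, use a unilateral deviation plus \ntr{} to force a contradiction with \sdsp{}) is the same as the paper's, but your concrete instantiation has a structural problem that the paper avoids.

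With the fully symmetric base profile (both agents reporting $a\succ b\succ c\succ d$), the four feasible allocations for agent~$2$ are $\{a,c\},\{a,d\},\{b,c\},\{b,d\}$, which are \emph{not} an sd-chain under $\succ_2$. As a result, your candidate manipulations do not produce sd-dominance except in degenerate corners: when agent~$2$ swaps $c$ and $d$, her $\{a,b,c\}$-prefix share drops from $1+p_2+p_4$ to $1$, so $f(R')(2)\sd{2} f(R)(2)$ only if $p_2=p_4=0$; the mirror deviation for agent~$1$ likewise bites only if $p_1=p_3=0$; and the $a\leftrightarrow b$ swap bites only if $p_3=p_4=0$. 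These conditions are mutually incompatible, so the ``system of inequalities'' you allude to does not close with the manipulations you name. You correctly flag this as the hard step, but you have not indicated any manipulation at the symmetric profile that always sd-dominates, and I do not see one.

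The paper sidesteps this by choosing an \emph{asymmetric} base profile, $\succ_1\!:\,a\,b\,c\,d$ and $\succ_2\!:\,d\,a\,b\,c$, so that agent~$2$'s top item $d$ is uncontested. The three feasible (\opt{}+\sdefo{}) assignments then give agent~$2$ the bundles $\{c,d\},\{b,d\},\{a,d\}$, which \emph{are} an sd-chain for $\succ_2$. When agent~$2$ misreports $a\,b\,d\,c$, the feasible support shifts strictly upward along this chain, so $f(R')(2)\sd{2} f(R)(2)$ holds \emph{unconditionally}; \sdsp{} then pins both outputs to the single deterministic assignment $1\gets\{a,c\},\,2\gets\{b,d\}$. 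Applying the transposition $\pi=(c\,d)$ and rerunning the same argument with the agents' roles reversed yields $f(\pi(R'))\neq\pi(f(R'))$, contradicting \ntr{}. The whole proof needs one profile, one deviation, and one permutation---no system of inequalities. If you want to salvage your approach, replace the symmetric base profile by one where the deviator's feasible bundles form an sd-chain; otherwise the ``escape route'' you worry about stays open.
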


    \begin{proof}
        Suppose that $f$ is an \sdsp{} mechanism that satisfies  \fcm{}, \opt{}, \sdefo{}, and \ntr{} ex-post.
        Let $R$ be:
        \begin{equation*}
            \begin{split}
            \succ_1\text{: }& a\succ_1 b\succ_1 c\succ_1 d,\\
            \succ_2\text{: }& d\succ_2 a\succ_2 b\succ_2 c.
            \end{split}
        \end{equation*}
        
        In any assignment satisfying \opt{}, agent $1$ cannot obtain item $d$.
        We note \sdefo{} requires that an agent can obtain one more item  than any other at most.
        Then with this condition, the \opt{} assignments for $R$ are:
        \begin{equation*}
            \begin{split}
                A_1:&~1\gets \{a,b\},2\gets \{c,d\},\\
                A_2:&~1\gets \{a,c\},2\gets\{b,d\},\\
                A_3:&~1\gets\{b,c\},2\gets\{a,d\}.
            \end{split}
        \end{equation*}
        We observe that $A_1$ and $A_2$ also satisfy \fcm{} and \sdefo{}, and $A_3$ satisfies \sdefo{} here but violates \fcm{}.
        Then we have that $f(R)=\alpha_1*A_1+\alpha_2*A_2$, denoted $P$.

        Let $R'=(\succ_1,\succ'_2)$ be the profile obtained from $R$ when agent $2$ misreports her preferences as $\succ'_2$ below:
        \begin{equation*}
            \begin{split}
            \succ'_2\text{: }& a\succ'_2 b\succ'_2 d\succ'_2 c\\
            \end{split}
        \end{equation*}
        In any assignment satisfying \sdefo{} for $R'$, each agent must get only one item in $\{a,b\}$. Moreover, \opt{} requires that $1$ gets $c$ and $2$ gets $d$.
        In this way, only $A_2$ and $A_3$ satisfy \fcm{}, \opt{}, and \sdefo{} for the profile $R'$.
        Then we have that $f(R')=\alpha'_2*A_2+\alpha'_3*A_3$, denoted $P'$.
        We also observe that if $\alpha_1\neq0$ or $\alpha'_3\neq0$, then $P'\neq P$ and $P'_2\sd{2} P_2$ since $A_3\sd{2} A_2\sd{2} A_1$.
        Since $f$ satisfies \sdsp{}, we must have that $P'_2 = P_2 = A_2(2)$ which means that $\alpha_1=\alpha'_3=0$, and therefore $P'=P=A_2$.

        Let $\pi=\{(c,d),(d,c)\}$ be a permutation on $M$ that swaps the labels of items $c$ and $d$.
        We observe that $\pi(R')$ can be obtained by swapping the preferences of agents $1$ and $2$ in $R'$. Therefore $f(\pi(R'))$ can be constructed in the same manner as the assignment above, and it can be obtained by swapping the allocations of agents $1$ and $2$ in $P'=A_2$.
        We also have that $\pi(f(R'))=\pi(P')=\pi(A_2)$.
        Both assignments are shown below:
        \begin{equation*}
            \begin{split}
                f(\pi(R')):&~1\gets\{b,d\},2\gets\{a,c\},\\
                \pi(f(R')):&~1\gets \{a,d\},2\gets\{b,c\}.
            \end{split}
        \end{equation*}
        
        It is easy to see that $f(\pi(R'))\neq\pi(f(R'))$, meaning that $f$ must violate neutrality, a contradiction.
    \end{proof}

\section{Conclusion and Future Work}
Our results contribute towards the efforts to achieve both ex-ante and ex-post guarantees of both efficiency and fairness in the assignment of indivisible items.
We have provided the first mechanisms that satisfy ex-post \fcm{}, \opt{}, and \efo{} simultaneously.
In terms of the ex-ante guarantee, our \mam{} is fair, while \mpbm{} is efficient.
We have also  shown that  mechanisms of this kind cannot be strategyproof.

We wonder whether it is possible to achieve stronger  efficiency or fairness properties.
Recent works on identifying domain restrictions, under which certain impossibility results no longer pose a barrier and allow for mechanisms with stronger guarantees~\cite{hosseini2021fair,wang2023multi}, is a promising avenue for such investigations. In general, finding what combinations of  properties can be satisfied simultaneously, and what constitutes the BoBW, is an ongoing pursuit. 
Some works on designing mechanisms with desirable properties under constraints~\cite{garg2010assigning,budish2017course} that reflect real-world considerations such as agents' quotas~\cite{aziz2022vigilant,balbuzanov2022constrained} and more generally, those involving matroid constraints~\cite{dror2021fair,biswas2018fair,biswas2019matroid} are another interesting direction for future research.


\clearpage

\section*{Acknowledgments}
XG acknowledges the National Key R\&D Program of China under Grant 2021YFF1201102 for support. LX acknowledges NSF \#1453542, \#2007994, \#2106983, and a Google Research Award for support. YC acknowledges NSFC under Grants 62172016  and 61932001 for support. HW acknowledges NSFC under Grant 61972005 for support.

\bibliographystyle{named}
\bibliography{citation}

\begin{thebibliography}{}

\bibitem[\protect\citeauthoryear{Abdulkadiro{\u{g}}lu and
  S{\"o}nmez}{2003}]{abdulkadirouglu2003school}
Atila Abdulkadiro{\u{g}}lu and Tayfun S{\"o}nmez.
\newblock School choice: A mechanism design approach.
\newblock {\em American Economic Review}, 93(3):729--747, 2003.

\bibitem[\protect\citeauthoryear{Altunta{\c{s}} and
  Phan}{2022}]{altuntacs2022trading}
A{\c{c}}elya Altunta{\c{s}} and William Phan.
\newblock Trading probabilities along cycles.
\newblock {\em Journal of Mathematical Economics}, 100:102631, 2022.

\bibitem[\protect\citeauthoryear{Athanassoglou and
  Sethuraman}{2011}]{athanassoglou2011house}
Stergios Athanassoglou and Jay Sethuraman.
\newblock House allocation with fractional endowments.
\newblock {\em International Journal of Game Theory}, 40(3):481--513, 2011.

\bibitem[\protect\citeauthoryear{Aziz and Brandl}{2021}]{aziz2021efficient}
Haris Aziz and Florian Brandl.
\newblock {Efficient, fair, and incentive-compatible healthcare rationing}.
\newblock In {\em Proceedings of the 22nd ACM Conference on Economics and
  Computation}, pages 103--104, 2021.

\bibitem[\protect\citeauthoryear{Aziz and Brandl}{2022}]{aziz2022vigilant}
Haris Aziz and Florian Brandl.
\newblock The vigilant eating rule: A general approach for probabilistic
  economic design with constraints.
\newblock {\em Games and Economic Behavior}, 135:168--187, 2022.

\bibitem[\protect\citeauthoryear{Aziz}{2020}]{aziz2020simultaneously}
Haris Aziz.
\newblock Simultaneously achieving ex-ante and ex-post fairness.
\newblock In {\em International Conference on Web and Internet Economics},
  pages 341--355. Springer, 2020.

\bibitem[\protect\citeauthoryear{Babaioff \bgroup \em et al.\egroup
  }{2022}]{babaioff2022best}
Moshe Babaioff, Tomer Ezra, and Uriel Feige.
\newblock On best-of-both-worlds fair-share allocations.
\newblock In {\em International Conference on Web and Internet Economics},
  pages 237--255. Springer, 2022.

\bibitem[\protect\citeauthoryear{Balbuzanov}{2022}]{balbuzanov2022constrained}
Ivan Balbuzanov.
\newblock Constrained random matching.
\newblock {\em Journal of Economic Theory}, page 105472, 2022.

\bibitem[\protect\citeauthoryear{Biswas and Barman}{2018}]{biswas2018fair}
Arpita Biswas and Siddharth Barman.
\newblock Fair division under cardinality constraints.
\newblock In {\em Proceedings of the 27th International Joint Conference on
  Artificial Intelligence}, page 91–97. AAAI Press, 2018.

\bibitem[\protect\citeauthoryear{Biswas and Barman}{2019}]{biswas2019matroid}
Arpita Biswas and Siddharth Barman.
\newblock Matroid constrained fair allocation problem.
\newblock In {\em Proceedings of the AAAI Conference on Artificial
  Intelligence}, pages 9921--9922, 2019.

\bibitem[\protect\citeauthoryear{Bogomolnaia and
  Moulin}{2001}]{Bogomolnaia01:New}
Anna Bogomolnaia and Herv\'e Moulin.
\newblock {A new solution to the random assignment problem}.
\newblock {\em Journal of Economic Theory}, 100(2):295--328, 2001.

\bibitem[\protect\citeauthoryear{Budish \bgroup \em et al.\egroup
  }{2013}]{Budish2013:Designing}
Eric Budish, Yeon-Koo Che, Fuhito Kojima, and Paul Milgrom.
\newblock Designing random allocation mechanisms: Theory and applications.
\newblock {\em American Economic Review}, 103(2):585--623, 2013.

\bibitem[\protect\citeauthoryear{Budish \bgroup \em et al.\egroup
  }{2017}]{budish2017course}
Eric Budish, G{\'e}rard~P Cachon, Judd~B Kessler, and Abraham Othman.
\newblock Course match: A large-scale implementation of approximate competitive
  equilibrium from equal incomes for combinatorial allocation.
\newblock {\em Operations Research}, 65(2):314--336, 2017.

\bibitem[\protect\citeauthoryear{Budish}{2011}]{budish2011combinatorial}
Eric Budish.
\newblock The combinatorial assignment problem: Approximate competitive
  equilibrium from equal incomes.
\newblock {\em Journal of Political Economy}, 119(6):1061--1103, 2011.

\bibitem[\protect\citeauthoryear{Caragiannis \bgroup \em et al.\egroup
  }{2019}]{caragiannis2019unreasonable}
Ioannis Caragiannis, David Kurokawa, Herv{\'e} Moulin, Ariel~D Procaccia,
  Nisarg Shah, and Junxing Wang.
\newblock The unreasonable fairness of maximum nash welfare.
\newblock {\em ACM Transactions on Economics and Computation}, 7(3):1--32,
  2019.

\bibitem[\protect\citeauthoryear{Chen and S{\"o}nmez}{2006}]{chen2006school}
Yan Chen and Tayfun S{\"o}nmez.
\newblock School choice: An experimental study.
\newblock {\em Journal of Economic Theory}, 127(1):202--231, 2006.

\bibitem[\protect\citeauthoryear{Dror \bgroup \em et al.\egroup
  }{2021}]{dror2021fair}
Amitay Dror, Michal Feldman, and Erel Segal-Halevi.
\newblock On fair division under heterogeneous matroid constraints.
\newblock In {\em Proceedings of the AAAI Conference on Artificial
  Intelligence}, pages 5312--5320, 2021.

\bibitem[\protect\citeauthoryear{Dur \bgroup \em et al.\egroup
  }{2018}]{dur2018first}
Umut Dur, Timo Mennle, and Sven Seuken.
\newblock First-choice maximal and first-choice stable school choice
  mechanisms.
\newblock In {\em Proceedings of the 2018 ACM Conference on Economics and
  Computation}, pages 251--268, 2018.

\bibitem[\protect\citeauthoryear{Foley}{1966}]{foley1967resource}
Duncan~Karl Foley.
\newblock {\em Resource Allocation and the Public Sector}.
\newblock PhD thesis, Yale University, 1966.

\bibitem[\protect\citeauthoryear{Freeman \bgroup \em et al.\egroup
  }{2020}]{freeman2020best}
Rupert Freeman, Nisarg Shah, and Rohit Vaish.
\newblock Best of both worlds: Ex-ante and ex-post fairness in resource
  allocation.
\newblock In {\em Proceedings of the 21st ACM Conference on Economics and
  Computation}, pages 21--22, 2020.

\bibitem[\protect\citeauthoryear{Friedman}{1955}]{friedman1955role}
Milton Friedman.
\newblock The role of government in education.
\newblock In {\em Economics and the Public Interest}. Rutgers University Press,
  1955.

\bibitem[\protect\citeauthoryear{Friedman}{1962}]{friedman1962capitalism}
Milton Friedman.
\newblock Capitalism and freedom.
\newblock {\em Ethics}, 74(1), 1962.

\bibitem[\protect\citeauthoryear{Gamow and Stern}{1958}]{gamow1958puzzle}
George Gamow and Marvin Stern.
\newblock {\em {Puzzle-Math}}.
\newblock Viking Press, 1958.

\bibitem[\protect\citeauthoryear{Garg \bgroup \em et al.\egroup
  }{2010}]{garg2010assigning}
Naveen Garg, Telikepalli Kavitha, Amit Kumar, Kurt Mehlhorn, and Juli{\'a}n
  Mestre.
\newblock Assigning papers to referees.
\newblock {\em Algorithmica}, 58(1):119--136, 2010.

\bibitem[\protect\citeauthoryear{Ghodsi \bgroup \em et al.\egroup
  }{2011}]{Ghodsi11:Dominant}
Ali Ghodsi, Matei Zaharia, Benjamin Hindman, Andy Konwinski, Scott Shenker, and
  Ion Stoica.
\newblock {Dominant resource fairness: Fair allocation of multiple resource
  types}.
\newblock In {\em {Proceedings of the 8th USENIX Conference on Networked
  Systems Design and Implementation}}, pages 323--336, 2011.

\bibitem[\protect\citeauthoryear{Grandl \bgroup \em et al.\egroup
  }{2014}]{Grandl15:Multi}
Robert Grandl, Ganesh Ananthanarayanan, Srikanth Kandula, Sriram Rao, and
  Aditya Akella.
\newblock Multi-resource packing for cluster schedulers.
\newblock In {\em Proceedings of the 2014 ACM Conference on SIGCOMM}, pages
  455--466, 2014.

\bibitem[\protect\citeauthoryear{Guo \bgroup \em et al.\egroup
  }{2023}]{guo2021favoring}
Xiaoxi Guo, Sujoy Sikdar, Lirong Xia, Yongzhi Cao, and Hanpin Wang.
\newblock Favoring eagerness for remaining items: Designing efficient, fair,
  and strategyproof mechanisms.
\newblock {\em Journal of Artificial Intelligence Research}, 76:287--339, 2023.

\bibitem[\protect\citeauthoryear{Hoefer \bgroup \em et al.\egroup
  }{2022}]{hoefer2022best}
Martin Hoefer, Marco Schmalhofer, and Giovanna Varricchio.
\newblock Best of both worlds: Agents with entitlements.
\newblock {\em arXiv preprint arXiv:2209.03908}, 2022.

\bibitem[\protect\citeauthoryear{Hosseini and
  Larson}{2019}]{hosseini2019multiple}
Hadi Hosseini and Kate Larson.
\newblock Multiple assignment problems under lexicographic preferences.
\newblock In {\em Proceedings of the 18th International Conference on
  Autonomous Agents and MultiAgent Systems}, pages 837--845, 2019.

\bibitem[\protect\citeauthoryear{Hosseini \bgroup \em et al.\egroup
  }{2021}]{hosseini2021fair}
Hadi Hosseini, Sujoy Sikdar, Rohit Vaish, and Lirong Xia.
\newblock Fair and efficient allocations under lexicographic preferences.
\newblock In {\em Proceedings of the AAAI Conference on Artificial
  Intelligence}, pages 5472--5480, 2021.

\bibitem[\protect\citeauthoryear{Irving \bgroup \em et al.\egroup
  }{2006}]{irving2006rank}
Robert~W. Irving, Telikepalli Kavitha, Kurt Mehlhorn, Dimitrios Michail, and
  Katarzyna~E. Paluch.
\newblock Rank-maximal matchings.
\newblock {\em ACM Transactions on Algorithms}, 2(4):602–610, 2006.

\bibitem[\protect\citeauthoryear{Kawase \bgroup \em et al.\egroup
  }{2020}]{kawase2020subgame}
Yasushi Kawase, Yutaro Yamaguchi, and Yu~Yokoi.
\newblock Subgame perfect equilibria of sequential matching games.
\newblock {\em ACM Transactions on Economics and Computation}, 7(4):1--30,
  2020.

\bibitem[\protect\citeauthoryear{Kirkpatrick \bgroup \em et al.\egroup
  }{2020}]{kirkpatrick2020scarce}
James~N Kirkpatrick, Sarah~C Hull, Savitri Fedson, Brendan Mullen, and Sarah~J
  Goodlin.
\newblock {Scarce-resource allocation and patient triage during the COVID-19
  pandemic}.
\newblock {\em Journal of the American College of Cardiology}, 76(1):85--92,
  2020.

\bibitem[\protect\citeauthoryear{Kojima and
  {\"{U}}nver}{2014}]{Kojima2014:Boston}
Fuhito Kojima and M.~Utku {\"{U}}nver.
\newblock {The ``Boston'' school-choice mechanism: an axiomatic approach}.
\newblock {\em Economic Theory}, 55(3):515--544, 2014.

\bibitem[\protect\citeauthoryear{Kojima}{2009}]{kojima2009random}
Fuhito Kojima.
\newblock Random assignment of multiple indivisible objects.
\newblock {\em Mathematical Social Sciences}, 57(1):134--142, 2009.

\bibitem[\protect\citeauthoryear{Li}{2020}]{li2020ties}
Mengling Li.
\newblock Ties matter: Improving efficiency in course allocation by allowing
  ties.
\newblock {\em Journal of Economic Behavior \& Organization}, 178:354--384,
  2020.

\bibitem[\protect\citeauthoryear{Manlove}{2013}]{manlove2013algorithmics}
David Manlove.
\newblock {\em Algorithmics of Matching under Preferences}.
\newblock World Scientific, 2013.

\bibitem[\protect\citeauthoryear{Moulin}{2004}]{Moul04}
H.~Moulin.
\newblock {\em Fair Division and Collective Welfare}.
\newblock MIT Press, 2004.

\bibitem[\protect\citeauthoryear{Pathak \bgroup \em et al.\egroup
  }{2021}]{pathak2021fair}
Parag~A Pathak, Tayfun S{\"o}nmez, M~Utku {\"U}nver, and M~Bumin Yenmez.
\newblock {Fair allocation of vaccines, ventilators and antiviral treatments:
  Leaving no ethical value behind in health care rationing}.
\newblock In {\em Proceedings of the 22nd ACM Conference on Economics and
  Computation}, pages 785--786, 2021.

\bibitem[\protect\citeauthoryear{Ramezanian and
  Feizi}{2021}]{Ramezanian2021:Ex-post}
Rasoul Ramezanian and Mehdi Feizi.
\newblock {Ex-post favoring ranks: A fairness notion for the random assignment
  problem}.
\newblock {\em Review of Economic Design}, 25:157–176, 2021.

\bibitem[\protect\citeauthoryear{Sayedahmed and
  others}{2022}]{sayedahmed2022centralized}
Dilek Sayedahmed et~al.
\newblock Centralized refugee matching mechanisms with hierarchical priority
  classes.
\newblock {\em The Journal of Mechanism and Institution Design}, 7(1):71--111,
  2022.

\bibitem[\protect\citeauthoryear{Wang \bgroup \em et al.\egroup
  }{2023}]{wang2023multi}
Haibin Wang, Sujoy Sikdar, Xiaoxi Guo, Lirong Xia, Yongzhi Cao, and Hanpin
  Wang.
\newblock Multi resource allocation with partial preferences.
\newblock {\em Artificial Intelligence}, 314:103824, 2023.

\bibitem[\protect\citeauthoryear{Y{\i}lmaz}{2010}]{yilmaz2010probabilistic}
{\"O}zg{\"u}r Y{\i}lmaz.
\newblock The probabilistic serial mechanism with private endowments.
\newblock {\em Games and Economic Behavior}, 69(2):475--491, 2010.

\end{thebibliography}

\clearpage

\appendix

\section*{Appendix}


\begin{restatable}{prop}{proprsdpnefo}{}\label{prop:rsdpnefo}
		RSDQ does not satisfy \sdefo{}.
	\end{restatable}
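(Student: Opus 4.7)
The proposition is a negative statement, so my plan is to exhibit a concrete counterexample showing that some \das{} in the support of RSDQ's output violates \sdefo{}, which suffices because ex-post \sdefo{} requires every \das{} in the support to be \sdefoa{}.

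I would take the simplest symmetric instance: $N=\{1,2\}$, $M=\{a,b,c,d\}$, with equal quotas of $2$, and with both agents reporting the identical preference $a\succ b\succ c\succ d$. Under RSDQ, one of the two serial orderings (the one where agent $2$ moves first) is realized with probability $1/2$. In that realization, agent $2$ greedily picks her top two items $\{a,b\}$ and agent $1$ is left with $\{c,d\}$, yielding the \das{} $A: 1\gets\{c,d\},\, 2\gets\{a,b\}$.

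Next I would verify that $A$ violates \sdefo{} from agent $1$'s viewpoint toward agent $2$. The only candidates for the single item to remove are $a$ or $b$. Removing $a$ leaves $\{b\}$, and the partial sum over items weakly preferred to $b$ according to $\succ_1$ is $0$ for $A(1)=\{c,d\}$ but $1$ for $\{b\}$, so $A(1)\nsd{1}\{b\}$. Removing $b$ leaves $\{a\}$, and the partial sum at $a$ is $0$ versus $1$, so $A(1)\nsd{1}\{a\}$. Hence no item $o\in A(2)$ satisfies the requirement of \sdefo{}.

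Finally I would conclude: since this \sdefoa{}-violating \das{} lies in the support of $\text{RSDQ}(R)$ with positive probability, RSDQ's output cannot be written as a convex combination of \sdefoa{} \dass{}, so RSDQ fails ex-post \sdefo{}. I expect no genuine obstacle in the argument; the only subtle point is remembering that \sdefo{} is a stochastic-dominance condition rather than a cardinal inequality, so I would emphasize which upper-contour partial sum fails for each choice of removed item.
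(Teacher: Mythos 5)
Your proposal is correct and matches the paper's own proof essentially verbatim: the paper uses the same two-agent, four-item instance with identical preferences $a\succ b\succ c\succ d$ and quota $2$, and notes that both realizations of the priority order give an assignment in which the losing agent's bundle $\{c,d\}$ fails the \sdefo{} test against $\{a,b\}$, exactly as you verify. The one caveat is your closing sentence: the expected matrix (all entries $1/2$) \emph{can} be written as $\tfrac{1}{2}\bigl(1\gets\{a,c\},2\gets\{b,d\}\bigr)+\tfrac{1}{2}\bigl(1\gets\{b,d\},2\gets\{a,c\}\bigr)$, a convex combination of assignments satisfying \sdefo{}, so the argument --- like the paper's --- implicitly evaluates the ex-post property on the support of the lottery RSDQ actually induces rather than on an arbitrary decomposition of the expected assignment.
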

    \begin{proof}
        We use the instance  with the following preferences profile $R$:
        \begin{equation*}
            \begin{split}
            \succ_{1,2}\text{: }& a\succ b\succ c\succ d.\\
            \end{split}
        \end{equation*}
        To guarantee \sdefo{}, each agent should be allocated two items, i.e., each agent's quota is $2$.
        It means that in the execution of RSDQ, agents in turn pick two items according to the given priority order $\impord{}$.
        However, $1\impord{} 2$ results in $1\gets(a,b),2\gets(c,d)$ and $2\impord{} 1$ results in $1\gets(c,d),2\gets(a,b)$, both of which violates \sdefo{}.
        It means that RSDQ does not satisfy \sdefo{} even for two agents with the quota $2$.
    \end{proof}


    \lemopt*
    \begin{proof}
        It is trivial that a \opt{} assignment $A$ should have no cycle for the relation.
        We show  $A$ is also a \opt{} assignment if it does not admit a cycle.

        Assume by contradiction that $A$ is lexicographically dominated by another \das{} $A'$.
        Let $N'=\{j\in N\mid A'(j)\ld{j}A(j)\}$ be the set of agents with better allocations in $A'$, which means that $A(k)=A'(k)$ for any other agent $k\in N\setminus N'$.
        For $j\in N'$, there exists item $o_j$ such that $o_j\in A'(j)$ and $o_j\notin A(j)$.
        It also means every agent $j$ in $N'$ must obtain $o_j$ from another agent who is also in $N'$.
        Then we can build up the following sequence $Seq$ as long as possible $j_1, j_2, \dots, j_h \in N'$ such that $j_1$ takes $o_h$ from $j_h$ and $j_{h'-1}$ takes $o_{j_{h'-1}}$ from $j_{h'}$ where $1<h'<h$.
        Since $N'$ is a finite set, the sequence is also finite, and therefore agent $j_h$ can only get $o_{j_h}$ from an agent who has already existed in $Seq$, which forms a cycle that contradicts the fact that $A$ does not admit a cycle.
    \end{proof}


    \clmmfhcr*
    \begin{proof}
    Consider an arbitrary agent $j$, and let $o=A^c(j)$ be the item allocated to agent $j$ in round $c$ of $\mam{}(N,M,R)$ through the execution of $\am{}(N,M^c,R)$. We will prove that for any item $o'\in M^{c+1}$, it holds that $o\succ_j o'$. Suppose for the sake of contradiction that $o'\succ_j o$. 
    
    Notice that by the construction of $M^c$ and $M^{c+1}$, it must hold that both $o$ and $o'$ must be in $M^c$. Now, since $A^c$ satisfies \fhcr{} for the matching instance $(N,M^c)$ under the preference profile $R$ (\Cref{clm:thm:mam:1}), we have by the definition of \fhcr{} that $o\in\tps{A^c}{r^*}$ for some $r^*$,
    \begin{equation}\label{eq:clm:mfhcr:1}
        o=\tp{j}{M^c\setminus{\bigcup_{r'<r^*}\tps{A^c}{r'}}}
    \end{equation}
    i.e. $o$ is agent $j$'s top item in $\tps{A^c}{r}$. Recall here that $\tps{A^c}{r}$ is the set of items in $M^c$ that are most preferred by some agent given the allocations of items in $\bigcup_{r'<r}\tps{A^c}{r'}$, whose owners cannot receive any more items in the matching instance $(N,M^c)$ in round $c$ within the execution of $\mam{}(N,M,R)$, i.e., $\tps{A^c}{r'}=\{o\in M^c| o=\tp{k}{M^c}\bigcup_{r'<r}\tps{A^c}{r'} \text{ for some agent } k\in N \text{ such that } A^c(k)\notin \bigcup_{r'<r}\tps{A^c}{r'}\}$. Also recall that since $A^c$ satisfies \fhcr{}, the set of items allocated within round $c$ of $\mam{}(N,M,R)$ is $M^c\setminus M^{c+1}=\bigcup_{r\ge0}\tps{A^c}{r}$.

    It follows that since $o'\in M^{c+1}$, $o'\notin\tps{A^c}{r}$ for any value of $r\ge 0$. Specifically, this implies that $o'\notin\tps{A^c}{r^*}$ and $o'\in M^c\setminus \bigcup_{r<r^*}\tps{A^c}{r}$. It follows that $o\succ_j o'$, a contradiction to our assumption, completing the proof.
    \end{proof}

    \clmgam*
    \begin{proof}
        Let $p=\sum_{i=1}^sp_{i}$ and $q=\sum_{i=1}^sq_{i}$. For any pair of $p_i$ and $q_i$ with $p_i \neq q_i$, there exists an item $o_i$ such that $p_i({o_i})>q_i({o_i})$ and $p_i({o'})=q_i({o'})$ for any $o'\succ o_i$.
            Let $M'$ be the set of all such items.
        If $M'=\emptyset$, then $p=q$ and the claim trivially holds.
        When $M'\neq\emptyset$, let $o_i$ be the one ranked highest in $M'$ without loss of generality.
        We prove that $p({o_i})>q({o_i})$ and $p({o'})=q({o'})$ for any $o'\succ o_i$ and therefore $p\ld{} q$.
        Assume by contradiction that there exists $o\succ_j o_i$ with $p({o})<q({o})$.
        It means that there exists a pair  of $p_h$ and $q_h$ such that $p_h({j,o})<q_h({k,o})$.
        By the condition that $p_h\ld{}q_h$, we have that there exists another item $o_h\succ o\succ o_i$ with $p_h({o_h})>q_h({o_h})$ and $p_h({o'})=q_h({o'})$ for any $o'\succ_j o_h$, a contradiction to the selection of $o_i$.
        \end{proof}


        \begin{definition}[\bf trivial extension for \fhr{}]\label{dfn:tfhr}
		Given a preference profile $R$, an \das{} $A$ satisfies {\bf \ffhr{} (\fhr{})} if for any agents $j$ and $k\in N$ and any items $o_j\in A(j)$ and $o_k\in A(k)$, $\rk{j}{o_j}\le\rk{k}{o_j}$ or $\rk{k}{o_k}<\rk{k}{o_j}$.
	\end{definition}

    \begin{restatable}{prop}{propimp_fhr}{}\label{prop:imp_fhr}
		An allocation mechanism cannot satisfy both \fhr{} and \sdefo{}.
	\end{restatable}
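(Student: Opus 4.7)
The plan is to prove the impossibility by exhibiting a single instance where no assignment simultaneously satisfies \fhr{} and \sdefo{}. I will consider the profile with $n=2$ agents and $m=6$ items where $\succ_1:\; a\succ b\succ c\succ d\succ e\succ f$ and $\succ_2:\; b\succ c\succ d\succ e\succ f\succ a$. The key feature is that both agents share the same relative ranking on $\{b,c,d,e,f\}$ but disagree sharply on $a$; agent $1$ is the unique top-ranker of $a$ and agent $2$ the unique top-ranker of $b$.

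First, I would characterize the assignments satisfying \fhr{}. Since agent $1$ is the unique agent with $\rk{1}{a}=1$, the \fhr{} condition applied to the pair $(a,\,o_k)$ forces $a\in A(1)$ in every non-degenerate \fhr{} allocation; symmetrically $b\in A(2)$. For each remaining $x\in A(1)\cap\{c,d,e,f\}$ we have $\rk{1}{x}>\rk{2}{x}$, so \fhr{} requires $\rk{2}{o_k}<\rk{2}{x}$ for every $o_k\in A(2)$. Intersecting this constraint over all such $x$ shows that $A(2)\cap\{c,d,e,f\}$ must be a strict prefix of agent $2$'s ranking $c\succ d\succ e\succ f$; equivalently $A(1)\cap\{c,d,e,f\}\in\{\emptyset,\{f\},\{e,f\},\{d,e,f\},\{c,d,e,f\}\}$. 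This gives five non-degenerate candidate \fhr{} allocations together with two degenerate ones in which one agent holds all items.

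Next, I would verify that each of these seven allocations violates \sdefo{}. In the cases where $|A(1)|\le 3$, $A(2)$ always contains a top prefix of $\succ_1$ starting at $b$, so even after deleting the item agent $1$ most prefers from $A(2)$ (namely $b$), the residual accumulates strictly more mass than $A(1)$ on a prefix like $\ucs(\succ_1,d)$, blocking agent $1$'s \sdefo{} claim. In the cases where $|A(2)|\le 2$, the asymmetry flips: because items $c,d,e,f$ are all ranked higher by agent $2$ than by agent $1$, $A(1)$ accumulates too much mass on a prefix of $\succ_2$, so removing any single item of $A(1)$ still leaves an allocation agent $2$ strictly prefers to $A(2)$. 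The two degenerate cases follow the same pattern. Each individual verification reduces to checking a handful of partial sums $\sum_{o'\in\ucs(\succ,o)}A(j)(o')$.

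The main obstacle is the characterization step: showing rigorously that \fhr{} forces both the unique-top items to their owners and the prefix/suffix structure on $\{c,d,e,f\}$. This requires carefully chaining the \fhr{} inequalities across all item pairs rather than checking them one at a time, and ruling out any allocation where either agent owns an item that is ranked strictly lower by itself than by the other agent. Once that structural claim is in place, the \sdefo{} violation for each candidate allocation reduces to a routine stochastic-dominance check, completing the proof.
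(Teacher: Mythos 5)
Your proposal is correct and follows the same high-level strategy as the paper---an impossibility by explicit counterexample---but builds a genuinely different instance and a different argument around it. The paper uses four agents and eight items, with agents $1$--$3$ sharing the prefix $a\succ b\succ c\succ d$ and agent $4$ ranking $c\succ_4 d$ on top: \fhr{} forces $c,d$ to agent $4$, pigeonhole leaves one of agents $1$--$3$ with neither $a$ nor $b$, and that agent still envies agent $4$ after one deletion, so no enumeration of allocations is needed. Your two-agent, six-item instance instead rests on a full characterization of the \fhr{} allocations, which does go through: since $\rk{1}{x}>\rk{2}{x}$ for every $x\in A(1)\cap\{c,d,e,f\}$, \fhr{} forces every item of $A(2)$ to be ranked by agent $2$ strictly above $x$, which yields exactly your five non-degenerate candidates (suffix structure on $\{c,d,e,f\}$) plus the two degenerate ones, and each fails \sdefo{}---via agent $1$'s prefix count at $\ucs(\succ_1,d)$ when $\lvert A(1)\rvert\le 3$, and via a cardinality comparison at the bottom of $\succ_2$ when $\lvert A(2)\rvert\le 2$. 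Two minor wording points, neither fatal: $A(2)\cap\{c,d,e,f\}$ is a possibly empty or full prefix rather than a \emph{strict} prefix (your explicit list of five sets is the correct one), and the degenerate allocations are indeed \fhr{} only because the pairwise condition is vacuous on an empty bundle, so they must be, and are, eliminated by the \sdefo{} cardinality argument. Net comparison: your route uses fewer agents but pays for it with the enumeration; the paper's identical-agents-plus-pigeonhole trick avoids characterizing \fhr{} allocations entirely.
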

    
    \begin{proof}
        Given an instance with $4$ agents, $8$ items, and the following preference profile:
        
        \begin{equation*}
            \begin{split}
            \succ_{1-3}\text{: }& a\succ b\succ c\succ d\succ\text{others,}\\
            \succ_4\text{: }& c\succ_4 d\succ_4\text{ others.}
            \end{split}
        \end{equation*}

        Let $A$ be any \das{} satisfying \fhr{}.
        According to \fhr{}, it is easy to see that agent $4$ is allocated items $c$ and $d$ in $A$.
        Meanwhile, items $a$ and $b$ are allocated to agents $1-3$ in $A$, which means that at least one agent among them does not get either $a$ or $b$.
        Without loss of generality, let this agent be $1$.
        Then we see that $A(1)\sd{1}A(4)\setminus{}\{o\}$ does not hold no matter $o$ is either $c$ or $d$, which means that $A$ can never satisfy \sdefo{}.
    \end{proof}

    \begin{algorithm}[htb]
		\begin{algorithmic}[1]
			\State {\bf Input:} An assignment problem $(N,M)$, random assignments $P^1,\dots,P^{\lceil{m/n}\rceil}$.
            \State $M'\gets M\cup\{{\bf nil}\}$. $Q\gets 0^{(n*\lceil{m/n}\rceil)\times m}$.
            \State For each agent $j\in N$, create subagents $j^1,\dots,j^{\lceil{m/n}\rceil}$.
            \For{$j\in N$}
                \For{$o\in M$}
                    \State $Q(j^c)\gets P^c(j)$.
                \EndFor
                \If{$\sum_{o\in M} P^{\lceil{m/n}\rceil}(j,o)<1$}
                    \State $Q(j^{\lceil{m/n}\rceil},{\bf nil})\gets 1-\sum_{o\in M} P^{\lceil{m/n}\rceil}(j,o)$.
                \EndIf
            \EndFor
            \State Draw a $(n*\lceil{m/n}\rceil)\times \lvert M'\rvert$ \das{} $\hat{A}$ according to the probabilistic distribution represented by bistochastic random assignment $Q$.
            \State Build the $n\times m$ \das{} $A$ such that for $j\in N$ and $o\in M$, $A(j,o)\gets \sum_{c=1}^{\lceil{m/n}\rceil}\hat{A}(j^c,o)$.
            \State \Return  $A$.
		\end{algorithmic}
		\caption{\label{alg:generate} Generating an \das{}}
	\end{algorithm}

 \clmgpbm*

 \begin{proof}
        We prove that the relation in~\Cref{lem:sdopt} is acyclic in $P^c$.
        Assume that there is such a cycle, and there exists an agent $j\in N$ with $o\succ_j o'$ and $P^c(j,o')>0$.
        We observe that in round $c$ of \mpbm{} that generates $P^c$, agent $j$ consumes item $o'$ at consumption round $r'=\rk{j}{o'}$.
        We also have that item $o$ is exhausted by the end of consumption round $r=\rk{j}{o}<r'$.
        Otherwise, it follows that agent $j$ stops consuming item $o$ at consumption round $r$ because $\sum_{o''\in\ucs(\succ_j,o)}P^c(j,o'')=1$, which contradicts $P^c(j,o')>0$.
        Then we have that for any pair of such $o$ and $o'$, item $o$ is consumed to exhaustion at an earlier consumption round than $o'$.
        Since $o,o'$ are in the cycle, we also have that item $o'$ is consumed to exhaustion at an earlier than $o$, which is a contradiction.
    \end{proof}

\clmthmmpbm*

\begin{proof}
            \noindent{\bf\boldmath  (\romannumeral1) $A^c$ satisfies \fhr{}.} 
            Suppose for the sake of contradiction that there exists a pair of agents $j,k\in N$ such that $o=A^c(j)\succ_k A^c(k)=o'$ and $r=\rk{j}{o}>\rk{k}{o}=r'$, violating \fhr{}.
            It follows that $P^c({j,o})>0$ and $P^c({k,o'})>0$.
            Then, in round $c$ of \mpbm{} in which $P^c$ is computed, agent $j$ must consume $o$ at consumption round $r$, which means $s(o)>0$ at the beginning of consumption round $r$, i.e., item $o$ is not exhausted and therefore available to consumer in the earlier consumption round $r'<r$.
            By construction, agent $k$ must consume $o$ at consumption round $r'$ and item $o$ is not exhausted, which implies that $\sum_{o''\in\ucs(\succ_k,o)}P^c({k,o''})=1$ and therefore $P^c({k,o'})=0$ since $o\succ_k o'$, which is a contradiction.
            
            \vspace{0.5em}
            \noindent{\bf\boldmath  (\romannumeral2) $A^c(j)\succ_j A^{c+1}(k)$.}
            Assume for the sake of contradiction that $A^{c+1}(k)=o'\succ_j o=A^c(j)$, which implies that $r'=\rk{j}{o'}<\rk{j}{o}$.
            It follows that $P^{c}({j,o})>0$ and $P^{c+1}({k,o'})>0$, which means $s(o')>0$ at the beginning of round $c+1$ of \mpbm{} used to compute $P^{c+1}$.
            At round $c$ of \mpbm{}, agent $j$ is able to consume $o'$ in consumption round $r'$, during which $o'$ is not consumed to exhaustion.
            It follows that $\sum_{o''\in\ucs(\succ_j,o')}P^c=1$, which implies that $P^c(j,o)=0$ since $o'\succ_j o$, a contradiction.
        \end{proof}

\end{document}